\documentclass[prx,twocolumn,aps,superscriptaddress,nofootinbib,10pt,longbibliography]{revtex4-2}

\usepackage[utf8]{inputenc}
\usepackage{needspace}
\usepackage[english]{babel}
\usepackage{graphicx}
\usepackage{dsfont}
\usepackage{amsmath, amssymb}
\usepackage{amsthm}
\usepackage{amsfonts}
\usepackage{physics}
\usepackage{bm}
\usepackage[dvipsnames]{xcolor}
\usepackage{subcaption}
\usepackage{tikz}
\usetikzlibrary{matrix, arrows.meta,positioning, decorations.pathreplacing}

\usepackage[colorlinks=true, linkcolor=Purple,citecolor=Violet,urlcolor=Violet]{hyperref}

\usepackage{tabularx}
\usepackage[noend,linesnumbered, ruled, vlined]{algorithm2e}
\let\savednl\nl
\newcommand{\nonl}{\renewcommand{\nl}{\let\nl\savednl}}
\usepackage{tcolorbox}
\tcbuselibrary{breakable}

\newtheorem{theorem}{Theorem}
\newtheorem{lemma}{Lemma}
\newtheorem{proposition}{Proposition}
\newtheorem*{proposition*}{Proposition}
\newtheorem{definition}{Definition}
\newtheorem{corollary}{Corollary}
\theoremstyle{remark}
\newtheorem{remark}{Remark}

\newcommand{\bZ}{\mathbb{Z}}
\newcommand{\bR}{\mathbb{R}}
\newcommand{\one}{\mathds{1}}

\newcommand{\Sym}{\mathrm{Sym}}
\newcommand{\Imp}{\mathrm{Imp}}
\newcommand{\DecX}{\mathrm{Decode}_X}
\newcommand{\DecZ}{\mathrm{Decode}_Z}
\newcommand{\CheckCondC}{\mathrm{CheckCondC}}
\newcommand{\GenTwisted}{\mathrm{GenTwisted}}
\newcommand{\GenLinking}{\mathrm{GenLinking}}
\DeclareMathOperator{\im}{im}

\DeclareMathOperator{\Prob}{Prob}

\newcommand{\myparagraph}[1]{\needspace{12pt}
\par
\vspace{0.5\baselineskip}\noindent\textbf{#1}\;--\;}

\begin{document}

\title{Efficiently estimating failure rates of fault-tolerant logical non-Clifford blocks}
\author{Julio C. Magdalena de la Fuente}
\email{jm@juliomagdalena.de}
\affiliation{\footnotesize Dahlem Center for Complex Quantum Systems, Freie Universit\"at Berlin, 14195 Berlin, Germany}
\affiliation{\footnotesize School of Physics, The University of Sydney, NSW 2006, Australia}
\author{Thomas R. Scruby}
\affiliation{\footnotesize Okinawa Institute of Science and Technology, 1919-1 Tancha, Onna, Okinawa, Japan}
\affiliation{\footnotesize Iceberg Quantum}

\begin{abstract}
    Useful quantum computation requires the fault-tolerant implementations of a universal gate set which are generically not efficiently simulable.
    We devise an algebraic framework that allows for efficient sampling from the measurement distribution of fault-tolerant circuits that implement diagonal logic gates in the third level of the Clifford hierarchy.
    Upon successful sampling we also provide sufficient conditions for decoding success.
    The resulting estimate for the logical failure rate is an overestimate, which becomes more accurate for blocks that are fault tolerant against arbitrary local errors.
    The non-Clifford simulation overhead is independent of the number of logical qubits, making the method particularly attractive for large-scale simulations.
    The framework is based on viewing the non-Clifford gates in the circuit as a cohomology invariant of an underlying spacetime fault complex.
    The method can also be interfaced with Clifford simulators to simulate larger fault-tolerant circuits and algorithmic subroutines.
\end{abstract}

\maketitle

\section{Introduction}
Large-scale fault-tolerant quantum computation offers speed-ups for important problems in quantum simulation and cryptography~\cite{Shor1994Algorithms, Daley2022}.
In recent years, significant progress has been made towards building a useful quantum computer and small-scale experiments demonstrated several building blocks of fault-tolerant quantum algorithms~\cite{Postler2022, Bluvstein2023, Google2024, Gupta2024MagicState, Zhou2025, Rosenfeld2025Cultivation}.
Given the rapid progress in experimental implementation, reliable methods to  efficiently estimate the performance of logical gadgets and important algorithmic subroutines are needed.

Many parts of fault-tolerant algorithms use Clifford circuits, which can be simulated efficiently using the \emph{stabilizer formalism}~\cite{gottesman1997phd, Aaronson2004circuits}.
Universal quantum computation also requires \emph{non-Clifford operations}, which generally cannot be simulated efficiently.
However, fault-tolerant \emph{non-Clifford blocks}, components that reliably implement non-Clifford logic in the presence of noise, often have useful structure.
In particular, several approaches rely on diagonal physical non-Clifford gates~\cite{Cui2017diagonal}.
These properties may enable efficient, scalable tools for estimating their performance.

In this work, we introduce a new method to estimate logical failure rates of fault-tolerant non-Clifford blocks composed of CSS operations and diagonal gates in the third level of the Clifford hierarchy~\cite{Gottesman1999}.
These classes of circuits include scalable magic-state preparation schemes that rely on code-switching~\cite{Paetznick2013, Anderson2014, bombin2015gaugecolor, bombin2016jump} or gauging diagonal Clifford symmetries of a CSS code~\cite{tiedinknots, bauer2025planar, christos2026gauging, zhu2026nonabelianqldpc, williamson2026fastmagicstatepreparation} as well as unitary logical non-Clifford gates implemented by fault-tolerant unitary gates on a CSS code~\cite{EastinKnill, BravyiKoenig}, or through code-deformation operations that involve non-Abelian stabilizer codes~\cite{tiedinknots, huang2026hybridlattice, manjunath2026groupsurface, warman2026constantdepth}.

Our method relies on the observation that the measurement distribution is highly constrained if the circuit is fault-tolerant against arbitrary local errors.
By comparing to a \emph{clean} circuit, without any errors and non-trivial measurement outcomes, we can characterize the syndrome distribution via modifications to the clean distribution.
This allows for an efficient simulation of the decoding problem.
Moreover, we identify a sufficient condition for decoding success without tracking the logical state explicitly.
This makes the method particularly useful for non-Clifford blocks that act on many logical qubits in parallel~\cite{zhu2026nonabelianqldpc}.

\myparagraph{Relation to previous works}
The mathematical formalism developed in this work is the closest to Ref.~\cite{bombin2018propagation}, where the correlated structure on $Z$ errors due to residual $X$ errors was highlighted for the 3D color code.
This was further formalised in Ref.~\cite{bauer2025planar}.
Ref.~\cite{highthresholdJIT} applied the same approach to circuits that measure the stabilizers of a 2D non-Abelian code and sketched the steps in generalizing the approach to general spacetime LDPC protocols.
Here, we make this generalization explicit and extend it to make it usable in practice with Pauli-frame based Clifford simulators, which we describe further in the companion paper~\cite{companion}.

Other works have studied non-Pauli errors in error-correcting codes that can emerge from propagating Pauli errors through non-Clifford circuits~\cite{Scruby2022nonPauli} and implemented their simulation~\cite{Scruby2022JIT}.
Recently, Ref.~\cite{Surti2026simulation} uses error propagation to simulate circuits that prepare encoded magic states.
It requires the propagated errors to be Clifford but cannot simulate deep circuits with repeated rounds of syndrome extraction efficiently as they require the errors to be propagated to the end.\\

\emph{Note added:} Towards the end of finishing the work for this paper we became aware of parallel related work by Takada, Bartlett and Williamson~\cite{yugopaper} that introduces a Clifford-stabilizer tableau to efficiently simulate large classes of non-Clifford circuits and magic-state preparation protocols under Pauli noise.
The main difference to our work is that their scheme is based on an exact representation of the logical state and our method focusses on simulating the decoding problem only.
The scheme by Takada et al. requires a more elaborate tableau simulation but also enables the simulation of protocols with a low fault-distance that rely on post-selection techniques~\cite{psiCultivation, gidney2024magicstatecultivation}.
In practice, both approaches could be complementary, as our approach is suited to large-scale decoding simulations, and is likely more scalable for structured circuits, see for example the simulations in Ref.~\cite{highthresholdJIT}.
It would be interesting to explore the details in practical performance of both methods more, after optimizing important subroutines to make a more reliable comparison.

\section{Problem sketch and overview of the results}
A simulation to estimate failure rates of an error-correcting protocol under stochastic noise needs to go through the following steps:
\begin{enumerate}
    \item \textbf{Sample an error} configuration: The stochastic noise model is defined by a probability distribution of errors.
    \item \textbf{Sample a measurement} outcome: Quantum-mechanical measurements are probabilistic and in each run the circuit projects onto a fixed trajectory, governed by a probability distribution that needs to be sampled from.
    \item \textbf{Decode} the observed outcome: Feed the sampled outcome to the decoder which determines a correction, which is applied into the circuit, or tracked in software.
    \item \textbf{Decide} on decoding success/failure: Given a valid correction it must be decided if the errors have been corrected correctly or not.
\end{enumerate}
For general circuits steps 2 and 4 can lead to inefficiencies in the simulation compared to Clifford circuits as they might require a full state-vector simulation.
An additional complication arises from the fact that the logical error channel induced by stochastic errors in a general circuit is not necessarily stochastic on the logical level, due to unpredictable effects of projective measurements.
In this work, we focus on a special class of circuits for which all steps in the list above can be done efficiently, without needing to keep track of any intermediate representation of the actual wave function.

\begin{tcolorbox}[
  colback=white!95!black,
  colframe=white!20!black, 
  boxrule=0.5pt,
  arc=2mm,
  boxsep=0pt
]
We consider \emph{third-order circuits}. These are circuits composed of
\begin{itemize}
    \item CSS operations: controlled-$X$ gates, Pauli-$X$ and Pauli-$Z$ measurements,
    \item diagonal gates in the third level of the Clifford hierarchy: $T$, controlled-$S$ gates and controlled-controlled-$Z$ gates,
    \item Pauli errors and corrections.
\end{itemize}
The third-order circuits that we focus on define \emph{fault-tolerant logical blocks}.
These are circuits that implement a logical third-order quantum instrument from an input CSS code into an output CSS code and are designed to be fault tolerant against arbitrary local errors.
\end{tcolorbox}

Since all $Z$ operators commute with the non-Clifford gates, the distribution of $Z$ measurements is easy to sample from.
The difficulty for both step 2 and 4 above hence lies in sampling from and decoding the $X$ measurement outcomes.
\begin{tcolorbox}[
  colback=white!95!black,   
  colframe=white!20!black, 
  boxrule=0.5pt,
  arc=2mm,
  boxsep=0pt,
  halign=center
]
    We can exactly sample from the measurement distribution of a third-order circuit when the $X$-like errors in the circuit together with the corrections fulfil a \emph{contractibility condition} (condition C).\\[6pt]
    In this case we can also give a precise condition on when the decoding of $X$-measurement outcomes succeeds or fails.
\end{tcolorbox}
In a circuit that is fault tolerant against all local errors, condition C requires the regions of the combined $X$-like errors in the circuit to be sufficiently well-isolated from each other.
In Sec.~\ref{sec:FT-thirdorder} we give a precise technical definition.
The origin of this condition is that our simulation paradigm requires that the measurement distribution is independent of the logical information which cannot be guaranteed if condition C is not fulfilled.

Under stochastic Pauli noise not every sampled instance will fulfil condition C.
In these cases we declare decoding failure.
In general, it is also exponentially hard in the volume of the circuit to check condition C exactly. Thus for an efficient simulator we require an approximate method for checking condition C or sufficient structure to simplify the problem. 
Topological protocols are paradigmatic examples of instances with more structure and Ref.~\cite{highthresholdJIT} showcased how our formalism can be applied for a precise estimate.
For the general case where we check the condition approximately, we will overestimate the failure rate.
In practice, we expect the instances in which condition C breaks to be rare in the fault tolerant setting and the cases in which it would be misidentified by an approximate condition to correlate with low confidence in the decoding output.

\myparagraph{Mathematical structure of the formalism}
Using a homological description of third-order circuits we derive a \emph{non-Clifford detector-error model}.
As in a CSS circuit, the $X$ errors $b_e$ and $Z$ measurement outcomes $b_m$ are subject to linear constraints called \emph{$Z$-detectors}, $D_Z$, that can be captured by a single linear map $d:T_X\to D_Z$, where $T_X$ is the space of $X$-like error locations.
The $X$ correction $b_c$ is chosen such that
\begin{align}
    d (b_m+b_e)= db_c,
\end{align}
with the goal that the combined error $b=b_m+b_c+b_e$ has a low overlap region with the non-Clifford gates in the circuit.

Given that the combined error $b$ fulfils condition C, we derive linear constraints called \emph{twisted $X$-detectors}, $D_X^b$, on $Z$ errors $c_e$ and $X$ measurement outcomes $c_m$.
These are captured by a single linear map $\partial^b\colon T_Z\to D_X^b$, where $T_Z$ is the space of $Z$-like error locations, and an affine shift $\kappa_b\in D_X^b$. Then
\begin{align}\label{eq:intro-twisted-constraints}
    \partial^b c_e = \partial^b c_m + \kappa_b
\end{align}
defines the exact decoding problem for $Z$ errors.
The key difference to the usual detector error model in Clifford circuits is that the linear constraints explicitly depend on $b$.
\begin{tcolorbox}[
  colback=white!95!black,   
  colframe=white!20!black, 
  boxrule=0.5pt,
  arc=2mm,
  boxsep=0pt,
  halign=center
]
We show that every measurement outcome in the subspace defined by Eq.\eqref{eq:intro-twisted-constraints} appears with the same probability.
\end{tcolorbox}

We construct $\partial^b$ from knowledge about the constraints for $b=0$.
Fault tolerance for the $b=0$ instances requires the non-Clifford gates to obey a very strong symmetry with respect to the detecting regions in the underlying CSS circuit.
\begin{tcolorbox}[
  colback=white!95!black,   
  colframe=white!20!black, 
  boxrule=0.5pt,
  arc=2mm,
  boxsep=0pt,
  halign=center
]
    The non-Clifford gates in the circuit must define a cohomology invariant of an underlying spacetime fault complex.
\end{tcolorbox}
From this property, we conclude that ``untwisted'' detectors, $D_X^{b=0}$, are the same as the $X$-detectors of the underlying CSS circuit.
Additionally, we infer that for all $b$ fulfilling condition C, $D_X^b$ is the kernel of a linear map
\begin{align}
    M^b\colon D_X^{b=0}\to D_X^{b=0},
\end{align}
which we can calculate efficiently.
In other words, $\im(M^b)$ is the space of detectors that are randomized compared to the $b=0$ case.
We can also calculate $\kappa_b$ efficiently.
This gives rise to an efficient algorithm to derive all twisted detectors.

To decide on the decoding success we devise a canonical way to map a correction in the twisted decoding graph onto a correction in the untwisted decoding graph defined by $D_X^{b=0}$, where the logical classes are well-defined.
This canonical mapping is guaranteed by condition C.
This mapping can be seen as creating additional errors in the untwisted decoding graph, which are undetectable in the twisted one.
We call these types of errors \emph{twisted errors}.

\myparagraph{Structure of the simulator}
Based on the formalism sketched above, the rough structure of the resulting algorithm is
\begin{tcolorbox}[
  colback=white!95!black,   
  colframe=white!20!black, 
  boxrule=0.5pt,
  arc=2mm,
  boxsep=0pt,
  halign=center
]
    Sample $X$ errors and $Z$ measurement errors\\
    $\downarrow$\\
    Decode to get $X$ corrections\\
    $\downarrow$\\
    Verify condition C\\
    $\downarrow$\\
    Sample $Z$ errors and $X$ measurement errors\\
    $\downarrow$\\
    Sample additional \emph{twisted errors}\\
    $\downarrow$\\
    Decode combined $Z$ errors to get $Z$ correction
\end{tcolorbox}

To make a more precise statement about the simulator we need to introduce a precise formalization of third-order circuits.
This is done in Sec.~\ref{sec:algebraic-prop}.
Sec.~\ref{sec:detectors+equivalences} contains general definitions and derivations for third-order circuits.
Sec.~\ref{sec:FT-thirdorder} contains the main results in the paper, with a precise statement of condition C in Def.~\ref{def:condition-C} and the main theorem in Thm.~\ref{thm:twistedconstraints}.
Lem.~\ref{lem:probabilities} guarantees the flatness of the measurement distribution.
In Sec.~\ref{sec:decoding-sim} we present the main simulator in Alg.~\ref{alg:overview} and describe the subroutines.
In Sec.~\ref{sec:interface} we describe how to interface our simulator with Clifford simulators that rely on the usual stabilizer formalism to track the instantaneous Pauli frames.

We review the mathematical background on $i$th order functions and the diagonal gates in the Clifford hierarchy in App.~\ref{app:ithorder} and exemplify the formalism by calculating the set of twisted errors in some simple examples in App.~\ref{app:examples}.

\section{Algebraic description of third-order circuits}\label{sec:algebraic-prop}
A third-order circuit can be represented as a particular type of tensor network~\cite{Bridgeman2017TNtutorial} composed of \emph{signed $Z$ tensors},
\begin{equation}\label{eq:Ztensordef}
  \vcenter{\hbox{%
    \begin{tikzpicture}[scale=1.2, transform shape]
      \node[anchor=north east] at (-0.10,0) {$s$};

      \fill (0,0) circle[radius=4.25pt];

      \foreach \angle/\index in {
        135/i,
         85/j,
         35/{\ldots},
        -15/k
      }{
        \draw[line width=0.6pt]
          (\angle:4.25pt) -- (\angle:0.50cm);

        \node at (\angle:0.66cm)
          {$\scriptstyle \index$};
      }
    \end{tikzpicture}%
  }}
  =
  (-1)^{s i}\delta_{i,j,\ldots,k},
  \qquad
  s\in\{0,1\},
\end{equation}
for an arbitrary number of edges, \emph{signed $X$ tensors},
\begin{equation}\label{eq:Xtensordef}
  \vcenter{\hbox{%
    \begin{tikzpicture}[scale=1.2, transform shape]
      \node[anchor=north east] at (-0.10,0) {$s$};

      \draw[fill=white, line width=1pt] (0,0) circle[radius=4.25pt];

      \foreach \angle/\index in {
        135/i,
         85/j,
         35/{\ldots},
        -15/k
      }{
        \draw[line width=0.6pt]
          (\angle:4.25pt) -- (\angle:0.50cm);

        \node at (\angle:0.66cm)
          {$\scriptstyle \index$};
      }
    \end{tikzpicture}%
  }}
  = \delta_{i+j+...+k=s\mod 2} \qcomma s\in \{0,1\}.
\end{equation}
Additionally, we consider third-order tensors,
\begin{align}\label{eq:third-order-tensordef}
   \vcenter{\hbox{%
    \begin{tikzpicture}[scale=1.2, transform shape]
  \foreach \y in {0.20,0.06,-0.22}{
    \draw[line width=0.6pt]
      (0.30,\y) -- (0.90,\y);
  }
  \node[scale=0.55] at (0.60,-0.045) {$\vdots$};
  \draw[fill=white, line width=1pt]
    (-0.30,-0.30) rectangle (0.30,0.30);
  \node at (0,0) {$S_t$};
  \draw[
    decorate,
    decoration={brace, amplitude=4pt}
  ]
    (1.00,0.26) -- (1.00,-0.27)
    node[midway, right=2pt] {$n$};
\end{tikzpicture}%
  }} = e^{2\pi i S_t[A]}\qcomma S_t\colon \bZ_2^{n} \to \bR/\bZ,
\end{align}
where $n$ is the number of edges of the tensor, and $S_t$ is a third-order function, see App.~\ref{app:ithorder}.

The tensor-network representation of third-order circuits can be obtained directly from the circuit description.
First, each CSS gate can be replaced by a small tensor-network composed of $X$ and $Z$ tensors~\cite{kissinger2022phasefree}, when post-selected onto the $+1$ outcome.
Each $n$-qubit third-level diagonal gate is represented by a single third-order tensor with $n$ open edges.
This tensor is connected to the qubit worldlines, in between the CSS gates on which it acts, via a $Z$ tensor.
In practice, we will consider third-order circuits compiled into a combination of $T$, $CS$ and $CCZ$ gates, whose associated tensors are
\begin{subequations}
\begin{align}
    \vcenter{\hbox{%
   \begin{tikzpicture}[scale=1.2, transform shape]
  \draw[line width=0.6pt]
    (0.30,0) -- (0.80,0);
  \draw[fill=white, line width=1pt]
    (-0.30,-0.30) rectangle (0.30,0.30);
  \node at (0,0) {$T$};
  \node[anchor=west, scale=0.8] at (0.80,0) {$A_1$};
\end{tikzpicture}%
}} =& e^{2\pi i A_1/8}\qcomma\\
\vcenter{\hbox{%
  \begin{tikzpicture}[scale=1.2, transform shape]
    \draw[line width=0.6pt]
      (0.30, 0.14) -- (0.80, 0.14);
    \draw[line width=0.6pt]
      (0.30,-0.14) -- (0.80,-0.14);
    \draw[fill=white, line width=1pt]
      (-0.350,-0.30) rectangle (0.350,0.30);
    \node at (0,0) {$CS$};
    \node[anchor=west, scale=0.8] at (0.80, 0.14) {$A_1$};
    \node[anchor=west, scale=0.8] at (0.80,-0.14) {$A_2$};
  \end{tikzpicture}%
}} =& e^{2\pi i A_1A_2/4}\qcomma\\
\vcenter{\hbox{%
  \begin{tikzpicture}[scale=1.2, transform shape]
    \draw[line width=0.6pt]
      (0.30, 0.250) -- (0.90, 0.250);
    \draw[line width=0.6pt]
      (0.30, 0) -- (0.90, 0);
    \draw[line width=0.6pt]
      (0.30,-0.250) -- (0.90,-0.250);
    \draw[fill=white, line width=1pt]
      (-0.45,-0.350) rectangle (0.45,0.350);
    \node at (0,0) {$CCZ$};
    \node[anchor=west, scale=0.8] at (0.90, 0.30) {$A_1$};
    \node[anchor=west, scale=0.8] at (0.90 , 0) {$A_2$};
    \node[anchor=west, scale=0.8] at (0.90,-0.30) {$A_3$};
  \end{tikzpicture}%
}}=& e^{2\pi i A_1A_2A_3/2}.
\end{align}

This procedure leads to a tensor-network representation of the third-order circuit that is composed of $X$ and $Z$ tensors on the qubit worldlines and additional third-order tensors connected to the worldlines via $Z$ tensors.
As a final step, we split each edge that connects two tensors of the same type using the identity rule,
\begin{align}\label{eq:identity-rules}
    \vcenter{\hbox{%
    \begin{tikzpicture}[scale=1.2, transform shape]
  \draw (0,0) -- (0.8,0);
\end{tikzpicture}%
  }} = \vcenter{\hbox{%
    \begin{tikzpicture}[scale=1.2, transform shape]
  \draw (0,0) --coordinate[midway] (midleg) (0.8,0);
  \fill (midleg) circle[radius=3pt];
\end{tikzpicture}%
  }} = \vcenter{\hbox{%
    \begin{tikzpicture}[scale=1.2, transform shape]
  \draw (0,0) --coordinate[midway] (midleg) (0.8,0);
  \draw[fill=white, line width=1pt] (midleg) circle[radius=3pt];
\end{tikzpicture}%
  }},
\end{align}
\end{subequations}
such that no $X$ tensor is connected directly to another $X$ tensor and similarly for the $Z$ tensors.

\subsection{Third-order circuit as path integral}\label{sec:pathintegral}
After representing the third-order circuit in this way, we can start extracting the algebraic data that specifies the circuit.
First, consider only the interior of the circuit, i.e., do not allow for any quantum in- or output.
This is sufficient for the analysis of detectors and equivalences.
In Sec.~\ref{sec:output-states} we present an exact description of the output state of a third-order circuit.

We start by ignoring the third-order tensors, and only consider the CSS part of the network.
Let $T_Z$ be the set of $Z$ tensors and $T_X$ the set of $X$ tensors.
The sub-network formed by $T_Z$ and $T_X$ is a bipartite graph and is described by a single adjacency matrix
\begin{align}\label{eq:ZX-adjacency}
    d_1\colon \bZ_2^{T_Z} \to \bZ_2^{T_X}.
\end{align}
We also refer to that diagram in short with $[T_Z\stackrel{d_1}{\longrightarrow} T_X]$.

\myparagraph{Identifying measurement outcomes and errors in \eqref{eq:ZX-adjacency}}
Each time a third-order circuit is run, measurement outcomes are recorded.
We represent each measurement outcome with a single bit and collect all measurement outcomes into a bitstring 
\begin{align}
    m = (m_x,m_z) \in \bZ_2^{M_X} \oplus \bZ_2^{M_Z},
\end{align}
where $M_X$ is the set of $X$ measurements in the circuit and $M_Z$ the set of $Z$ measurements in the circuit.
Each $m$ can be represented by a set of signs on the $X$ and $Z$ tensors in the tensor-network.
Formally, this defines maps $\bZ_2^{M_X}\hookrightarrow \bZ_2^{T_Z}$ and $\bZ_2^{M_Z}\hookrightarrow \bZ_2^{T_X}$.

Additionally, we consider the case of stochastic Pauli errors, where in each run a combination of Pauli errors happens while executing the circuit.
The errors are modelled to act in between two consecutive gates.
In each run of the circuit the error-pattern is fixed and can be represented by a bitstring
\begin{align}
    e = (e_x,e_z) \in \bZ_2^{P_X}\oplus \bZ_2^{P_Z},
\end{align}
where $P_{X(Z)}$ is the set of Pauli-$X(Z)$ error locations.
Each error location corresponds to an edge between an $X$ and $Z$ tensor in the tensor-network representation.
Using the fusion rule,
\begin{subequations}\label{eq:fusionrules-tensors}
\begin{align}
    \vcenter{\hbox{%
    \begin{tikzpicture}[scale=1.2, transform shape]
      \node[anchor=north east] at (-0.10,0) {$s$};
      \draw[fill=black, line width=1pt] (0,0) circle[radius=4.25pt];
      \foreach \angle/\index in {
        135/i,
         85/j,
         35/{\ldots},
        -15/k
      }{
        \draw[line width=0.6pt]
          (\angle:4.25pt) -- (\angle:0.70cm);
      }
      \draw[fill=black] (35:0.4cm) circle[radius=3pt];
      \node[font=\scriptsize, anchor=south] at (35:0.4cm) {$1$};
    \end{tikzpicture}%
  }} =& \vcenter{\hbox{%
    \begin{tikzpicture}[scale=1.2, transform shape]
      \node[anchor=north east, font=\small] at (-0.10,0) {$s+1$};
      \draw[fill=black, line width=1pt] (0,0) circle[radius=4.25pt];
      \foreach \angle/\index in {
        135/i,
         85/j,
         35/{\ldots},
        -15/k
      }{
        \draw[line width=0.6pt]
          (\angle:4.25pt) -- (\angle:0.50cm);
      }
    \end{tikzpicture}%
  }}
  \qq{and}\\
      \vcenter{\hbox{%
    \begin{tikzpicture}[scale=1.2, transform shape]
      \node[anchor=north east] at (-0.10,0) {$s$};
      \draw[fill=white, line width=1pt] (0,0) circle[radius=4.25pt];
      \foreach \angle/\index in {
        135/i,
         85/j,
         35/{\ldots},
        -15/k
      }{
        \draw[line width=0.6pt]
          (\angle:4.25pt) -- (\angle:0.70cm);
      }
      \draw[fill=white, line width=0.85pt] (35:0.4cm) circle[radius=3pt];
      \node[font=\scriptsize, anchor=south] at (35:0.4cm) {$1$};
    \end{tikzpicture}%
  }} =& \vcenter{\hbox{%
    \begin{tikzpicture}[scale=1.2, transform shape]
      \node[anchor=north east, font=\small] at (-0.10,0) {$s+1$};
      \draw[fill=white, line width=1pt] (0,0) circle[radius=4.25pt];
      \foreach \angle/\index in {
        135/i,
         85/j,
         35/{\ldots},
        -15/k
      }{
        \draw[line width=0.6pt]
          (\angle:4.25pt) -- (\angle:0.50cm);
      }
    \end{tikzpicture}%
  }}\; ,
\end{align}
\end{subequations}
we can uniquely map any error configuration onto a sign configuration on the $X$ and $Z$ tensors.
Formally, this defines injections $\bZ_2^{P_Z} \hookrightarrow\bZ_2^{T_Z}$ and $\bZ_2^{P_X} \hookrightarrow\bZ_2^{T_X}$.
Together with the measurements, we define entire \emph{chains} of signs in the network,
\begin{align}\label{eq:chargeandflux}
    c = e_z + m_x\in \bZ_2^{T_Z}\qq{and} b = e_x + m_z \in \bZ_2^{T_X}.
\end{align}

\myparagraph{Enriching CSS circuits by $i$th order gates}
Each third-order gate is represented by a third-order tensor that is connected to $Z$ tensors on the qubit worldlines each of which represents an explicit $\bZ_2$ variable that the tensor network sums over.
Each gate acts on a different subset of these degrees of freedom and we can represent the entirety of the third-order gates in the circuit as a third-order function
\begin{align}
    S:\bZ_2^{T_Z} \to \bR/\bZ.
\end{align}
Concretely, we index each $T$ gate in the circuit by a $\bZ_8 = \{0,1,...,7\}$ variable, each $CS$ gate by a $\bZ_4 = \{0,1,2,3\}$ and each $CCZ$ by a $\bZ_2=\{0,1\}$ variable on $T_Z$.
Taken together, the third-order gates in a third-order circuit are given by an array
\begin{equation}
\begin{split}
    s =& (s^{(T)}, s^{(CS)}, s^{(CCZ)})\\
    &\in \bZ_8^{T_Z}\oplus \bZ_4^{\mathrm{pairs}(T_Z)}\oplus\bZ_2^{\mathrm{triples}(T_Z)},
\end{split}
\end{equation}
where $\mathrm{pairs}(T_Z) = \{(t,t')\in T^{\times 2}\;|\; t\neq t'\}$ is the set of all disjoint, unordered, pairs in $T_Z$ and, similarly, $\mathrm{triples}(T_Z) = \{(t,t',t'')\in T^{\times 3}\;|\;t\neq t'\neq t''\neq t\}$ is the set of all disjoint triples in $T_Z$.
We denote a generic entry of the array $s$ by $(s^{(T)}_{a}, s^{(CS)}_{bc}, s^{(CCZ)}_{def})$, with $a,b, ...,f\in T_Z$.
The associated third-order function is given by the polynomial
\begin{widetext}
\begin{align}\label{eq:spacetime-3rd-phasepoly}
    S[A] = \sum_{i\in T_Z} \frac{s^{(T)}_{i}}{8} \overline{A}_i + \sum_{(i,j)\in \mathrm{pairs}(T_Z)} \frac{s^{(CS)}_{(ij)}}{4} \overline{A}_i \overline{A}_j + \sum_{(i,j,k) \in \mathrm{triples}(T_Z)} \frac{s^{(CCZ)}_{(ijk)}}{2} \overline{A}_i \overline{A}_j \overline{A}_k,
\end{align}    
\end{widetext}
where $\overline{A}$ denotes the lift of a $\bZ_2$-vector to a $\bZ$-vector that replaces $0, 1\in \bZ_2$ with the corresponding integers $0,1\in \bZ$.

\begin{remark}
    Eq.~\eqref{eq:spacetime-3rd-phasepoly} can be understood as a phase polynomial representation of the diagonal gates in the circuit, akin to the phase-polynomial representation of a single diagonal gate on a set of qubits, as introduced in Ref.~\cite{Cui2017diagonal} and heavily used in Ref.~\cite{Campbell_2017}.
    In that sense, $S$ is a \emph{spacetime phase polynomial} associated to the third-order circuit.
\end{remark}

We are now in a position to give an explicit expression to evaluate any third-order circuit, defined by an adjacency map $d_1:\bZ_2^{T_Z}\to \bZ_2^{T_X}$, a third-order function $S:\bZ_2^{T_Z} \to \bR/\bZ$, and an arbitrary sign configuration of $Z$ and $X$ tensors, captured by $c\in \bZ_2^{T_Z}$ and $b \in \bZ_2^{T_X}$.
The latter include both errors and measurement outcomes in the circuit, see Eq.~\eqref{eq:chargeandflux}.
We treat both $c$ and $b$ as \emph{external} variables, indexing the family of circuits obtained from all possible error and measurement outcome configurations.
Combining the expressions for the elementary tensors in the third-order circuit, see Eq.~\eqref{eq:third-order-tensordef}, we obtain the expression
\begin{align}\label{eq:third-order-circuit-bulk}
    C_{b,c} \propto \sum_{A\in \bZ_2^{T_Z}\colon d_1A = b} e^{2\pi i S[A]} (-1)^{\langle c, A\rangle},
\end{align}
up to an irrelevant global normalization factor.
$\langle c,A\rangle = \sum_i c_iA_i\mod 2$ denotes the standard inner product on $\bZ_2^{T_Z}$.
As this expression sums over all allowed spacetime configurations of the qubits $A$, weighted by their respective contributions, given by the diagonal gates in the circuit, we call this the \emph{path integral} expression of the circuit~\cite{Bauer2024pathintegral}.

\subsection{Detectors and equivalences}\label{sec:detectors+equivalences}

Decoding active QEC circuits is based on linear constraints on the measurement outcomes, depending on the error.
In the following, we derive the linear constraints from Eq.~\eqref{eq:third-order-circuit-bulk} based on generic properties of $S$ that will be made concrete in Sec.~\ref{sec:FT-thirdorder} .

For the rest of the section we describe how detectors can be characterized using the algebraic data from Sec.~\ref{sec:pathintegral}.
In contrast to the introduction before it is important to now consider third-order circuits with open edges, i.e. $C_{b,c}$ now corresponds to a linear operator from the input space to the output space of the circuit.

\begin{remark}
    Below, we make some definitions and explanations in a language that resembles the algebraic data we are working with.
    For most of the terms, however, equivalent notions have been developed for Clifford circuits.
    For some readers, the term \emph{equivalences} might be familiar as faults that are gauge operators of an auxiliary spacetime subsystem code that captures some QEC properties of the circuits~\cite{bacon2015sparse, delfosse2023spacetime, pesah2025spacetimecode}.
    In Ref.~\cite{blackwell2025codedistance} these types of faults were simply called \emph{trivial} faults and in Refs.~\cite{xyzruby, rüsch2026completeness} \emph{inconsequential faults}.
    We use the term detectors to refer to linear constraints among the measurement outcomes of the circuit.
    Often, these constraints are said to emerge in the absence of noise.
    We make it explicit that linear constraints exist for \emph{each} error pattern.
    In Clifford circuits the number of independent constraints is the same for each error pattern.
    Our work adds to the existing terminology by generalizing it to third-order circuits and highlighting that the presence of $X$ errors decreases the number of independent constraints on the $Z$ errors.
\end{remark}

\myparagraph{Detectors for $X$ errors}
The $X$ errors are constrained by relations among $Z$-symmetries of the circuit.
These take the role of equivalences among $Z$ errors in the circuit.
Since the non-Clifford gates in the circuit are diagonal all Pauli-$Z$ operators can be freely commuted past the gates, or equivalently, past the $Z$-tensors that connect the qubit worldlines to the third-order tensors defining $S$.
As a consequence, $b$ is constrained in the same way as in a CSS circuit.

To derive the $Z$-symmetries and relations among them, consider a single $X$ tensor. It has the following Pauli symmetry
\begin{align}\label{eq:Xtensor-symmetry}
  \vcenter{\hbox{%
    \begin{tikzpicture}[scale=1.2, transform shape]
      \node[anchor=north east] at (-0.10,0) {$s$};
      \draw[fill=white, line width=1pt] (0,0) circle[radius=4.25pt];
      \foreach \angle/\index in {
        135/i,
         85/j,
         35/{\ldots},
        -15/k
      }{
        \draw[line width=0.6pt]
          (\angle:4.25pt) -- (\angle:0.50cm);
      }
    \end{tikzpicture}%
  }}
  = (-1)^s\:
\vcenter{\hbox{%
  \begin{tikzpicture}[scale=1.2, transform shape]
    \node[anchor=north east] at (-0.10,0) {$s$};
    \foreach \angle [count=\leg] in {135,85,35,-15}{
      \draw[line width=0.6pt]
        (\angle:4.25pt)
        --
        coordinate[midway] (mid-\leg)
        (\angle:0.75cm);
      \fill (mid-\leg) circle[radius=2.75pt];
      \path
        (mid-\leg)
        ++({\angle-90}:0.2cm)
        node[font=\scriptsize, inner sep=0pt] {$1$};
    }
    \draw[fill=white, line width=1pt]
      (0,0) circle[radius=4.25pt];
  \end{tikzpicture}%
}}
\end{align}
for any number of outgoing edges.
By construction, every $X$ tensor is connected to a $Z$ tensor.
This allows us to push any Pauli-$Z$ operator from Eq.~\eqref{eq:Xtensor-symmetry} onto the adjacent $Z$ tensors by flipping their sign.
This $Z$-like symmetry can be expressed as
\begin{align}\label{eq:circuit-symmetry-boundary}
    C_{b,c} = (-1)^{\langle b, f\rangle} C_{b,c+d_1^T f}
\end{align}
for every $f\in \bZ_2^{T_X}$.
Note that the phase factor must be $+1$ whenever $d_1^T f = 0$.
This leads to a linear constraint of the form
\begin{align}\label{eq:b-cocyclecondition}
    \langle b, d_Z\rangle \stackrel{!}{=} 0\quad\forall d_Z\in \ker(d_1^T).
\end{align}
The elements in $\ker(d_1^T)$ can be understood as relations among the $Z$-like symmetries of the individual tensors that arise from the specific connectivity among the tensors in the network, defined by $d_1$.
The image of $d_1^T$ does exactly correspond to the $Z$-like \emph{inconsequential faults}, as introduced in Ref.~\cite{xyzruby}.
Operationally, these are combinations of Pauli-$Z$ errors and $X$ measurement outcomes that do not affect the action of the underlying circuit, as the associated tensor networks agree up to a global phase.

\begin{definition}\label{def:ZequivalencesXsyndrome}
For each $f\in \bZ_2^{T_X}$ we define an \emph{$Z$-equivalence} $g_f = d_1^T f\in \bZ_2^{T_Z}$.
We say each $d_Z\in \ker(d_1^T)$ defines a \emph{$X$-detector} via a linear constraint $\langle b, d_Z\rangle = 0$.
Let $D_Z = \{d_i\}_i$ be a generating set for $\ker(d_1^T)$.
We define the \emph{$X$-syndrome map}
\begin{align}
\begin{split}
    d_2\colon \bZ_2^{T_X} &\to \bZ_2^{D_Z};\\
    b &\mapsto \mqty(\vdots \\ \langle b, d_i\rangle \\ \vdots),
\end{split}    
\end{align}
where $\bZ_2^{D_Z}$ is the (abstract) space formed by all $X$ detectors.
For a given vector of $Z$-measurement outcomes $m_z\in \bZ_2^{T_X}$ we define the associated \emph{syndrome} $s_z = d_2m_z$.
\end{definition}

Indeed, this notion of a detector coincides with other definitions that have been established for Clifford circuits and defines a corresponding classical (linear) decoding problem.

\begin{lemma}\label{lem:Xdetectorswork}
    Let $m_z \in \bZ_2^{T_X}$ be a vector of $z$-measurement outcomes with syndrome $s_z$ and $e_x\in\bZ_2^{T_X}$ a vector of $X$ errors.
    There is a linear relation between the $X$ errors and the syndrome,
    \begin{align}\label{eq:X-decoding}
        d_2 e_x = s_z.
    \end{align}
\end{lemma}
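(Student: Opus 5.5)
The plan is to apply the constraint \eqref{eq:b-cocyclecondition} to the full sign chain $b = e_x + m_z$ from \eqref{eq:chargeandflux}, and then use linearity of $d_2$. The only real content is to justify that \eqref{eq:b-cocyclecondition} holds for every configuration $(b,c)$ that occurs with nonzero amplitude. From there the relation \eqref{eq:X-decoding} follows by rearranging over $\bZ_2$.

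\textbf{Step 1: show that nonzero amplitude forces $d_2 b = 0$.} I would argue directly from the path integral \eqref{eq:third-order-circuit-bulk}. The sum there runs over $A$ with $d_1 A = b$, and for the circuit with open edges the same holds with boundary variables included. If this constraint set is empty, then $C_{b,c}=0$ and the configuration never occurs.

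Now take $d_Z \in \ker(d_1^T)$. Then
\begin{align}
\langle b, d_Z\rangle = \langle d_1 A, d_Z\rangle = \langle A, d_1^T d_Z\rangle = 0 .
\end{align}
So every occurring configuration satisfies $\langle b, d_i\rangle = 0$ for all generators $d_i\in D_Z$, that is, $d_2 b = 0$.

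Alternatively, I could invoke \eqref{eq:circuit-symmetry-boundary} with $f = d_Z$. Since $d_1^T d_Z = 0$, this gives $C_{b,c} = (-1)^{\langle b,d_Z\rangle} C_{b,c}$, so $C_{b,c}=0$ unless $\langle b,d_Z\rangle=0$. This version has the advantage of using only the symmetry already established in the paper, so it also covers open edges.

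\textbf{Step 2: substitute the decomposition of $b$.} Writing $b = e_x + m_z$, where both are embedded in $\bZ_2^{T_X}$ via the injections described above, linearity of $d_2$ gives
\begin{align}
0 = d_2 b = d_2 e_x + d_2 m_z = d_2 e_x + s_z .
\end{align}
Working mod $2$, this is exactly $d_2 e_x = s_z$.

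\textbf{Main obstacle.} The delicate point is the open-edge setting. There $C_{b,c}$ is an operator, and $\ker(d_1^T)$ must be taken with respect to the interior $X$ tensors only. Boundary legs can then contribute to the symmetry, so a detector should only involve $f$ whose pushed $Z$ operators stay inside the network.

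I would handle this by restricting, as in Def.~\ref{def:ZequivalencesXsyndrome}, to $d_Z$ whose Pauli-$Z$ pushes from \eqref{eq:Xtensor-symmetry} land entirely on internal $Z$ tensors. For such $d_Z$ the identity $C_{b,c} = (-1)^{\langle b,d_Z\rangle}C_{b,c}$ holds as an operator identity, so the operator vanishes, and hence its outcome has zero probability, unless $\langle b,d_Z\rangle = 0$. I also need to note that the non-Clifford tensors do not spoil this. That holds because the $Z$-sign flips commute with the diagonal tensors $e^{2\pi i S[A]}$, as already observed above \eqref{eq:Xtensor-symmetry}.
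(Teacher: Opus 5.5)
Your proposal is correct and matches the paper's proof: apply the cocycle condition $d_2 b = 0$ from Eq.~\eqref{eq:b-cocyclecondition} to $b = e_x + m_z$, then use linearity of $d_2$ over $\bZ_2$. Your Step~1 just re-derives that condition (via $b = d_1 A$, or via the symmetry with $f \in \ker(d_1^T)$), which the paper already establishes in the text immediately before the lemma.
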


\begin{proof}
A short calculation, using the definition of $b$ from Eq.~\eqref{eq:chargeandflux}, verifies that $d_2b \stackrel{\eqref{eq:b-cocyclecondition}}{=} 0\; \Leftrightarrow\; d_2 e_x = d_2 m_z$.
\end{proof}

A decoder has to solve Eq.~\eqref{eq:X-decoding} for $e_x$. However, it is defined up to inconsequential errors leading to a high-dimensional solution space.
This space is split into logical sectors, which we define in Sec.~\ref{sec:output-states} when introducing quantum input- and output boundaries to the circuit.
These will be important to define the decoding success/failure.

\myparagraph{Detectors for $Z$ errors}
Similar to $X$ errors, $Z$ errors are constrained by relations among $X$ symmetries of the circuit which can be understood as equivalences among the $X$ errors.
These are generated by a subset of $X$-symmetries of the $Z$ tensors in the network,
\begin{align}\label{eq:Ztensor-symmetry}
  \vcenter{\hbox{%
    \begin{tikzpicture}[scale=1.2, transform shape]
      \node[anchor=north east] at (-0.10,0) {$s$};
      \draw[fill=black, line width=1pt] (0,0) circle[radius=4.25pt];
      \foreach \angle/\index in {
        135/i,
         85/j,
         35/{\ldots},
        -15/k
      }{
        \draw[line width=0.6pt]
          (\angle:4.25pt) -- (\angle:0.50cm);
      }
    \end{tikzpicture}%
  }}
  = (-1)^s\:
\vcenter{\hbox{%
  \begin{tikzpicture}[scale=1.2, transform shape]
    \node[anchor=north east] at (-0.10,0) {$s$};
    \foreach \angle [count=\leg] in {135,85,35,-15}{
      \draw[line width=0.6pt]
        (\angle:4.25pt)
        --
        coordinate[midway] (mid-\leg)
        (\angle:0.75cm);
      \draw[fill=white, line width=0.7pt] (mid-\leg) circle[radius=2.75pt];
      \path
        (mid-\leg)
        ++({\angle-90}:0.2cm)
        node[font=\scriptsize, inner sep=0pt] {$1$};
    }
    \draw[fill=black, line width=1pt]
      (0,0) circle[radius=4.25pt];
  \end{tikzpicture}%
}},
\end{align}
for any number of edges.
One might expect that the $X$ tensors on the edges above can be merged into the adjacent tensors to recover an equation similar to Eq.~\eqref{eq:Xtensor-symmetry}.
The third-order tensors, however, do constrain the mobility of $X$ tensors through the circuit which removes the $X$ symmetries in the local neighborhood of non-trivial third-order tensors.
Explicitly, for arbitrary $t\in \bZ_2^{T_Z}$ we find that 
\begin{subequations}
\begin{align}\label{eq:firing-Xs-third-order}
    C_{b,c} =& \sum_{A\in \bZ_2^{T_Z}\colon d_1A = b} e^{2\pi i S[A]} (-1)^{\langle c, A\rangle}\\
    \stackrel{\eqref{eq:Ztensor-symmetry}}{=}& (-1)^{\langle c,t\rangle}\sum_{A\in \bZ_2^{T_Z}\colon d_1 A = b+ d_1 t} e^{2\pi i S[A+t]} (-1)^{\langle c, A\rangle}.
\end{align}
\end{subequations}
The right hand side in the equation above corresponds to an effective transformation on data of $C_{b,c}$,
\begin{align}
    b \mapsto b + d_1 t\qq{and} S[A] \mapsto S[A+t],
\end{align}
which only defines an $X$-like equivalence $b\sim b' = b+d_1t$ if $S[A+t] = S[A]$.
This holds trivially for $t$ that is supported only on $Z$-tensors that are not connected to any third-order tensor.
Using the equations above we can exactly characterize the subset of $t\in \bZ_2^{T_Z}$ that share support with $S$ and define an $X$ equivalence.
For an $X$ equivalence, defined via $t\in \bZ_2^{T_Z}$ we find that
\begin{align}
    C_{b,c} = (-1)^{\langle c, t\rangle} C_{b+d_1t, c}.
\end{align}
This leads to one non-trivial relation among these symmetries for each $t\in \ker(d_1)$, from which we obtain linear relations of the form
\begin{align}
    \langle c, d_X\rangle \stackrel{!}{=} 0\;\quad \forall d_X\in \ker(d_1).
\end{align}
More generally, if $S[A]-S[A+t]\in \{0,1/2\}$ only depends on $b$ we obtain an affine linear constraint on $c$.

We summarize the quantities we describe above in the following definition.

\begin{definition}\label{def:thirdorder-symmetries}
    Let $C_{b,c}$ be a third-order circuit defined by a third-order function $S\colon \bZ_2^{T_Z}\to \bR/\bZ$ and fixed $b,c$ and recall that $d_1A=b$ can be considered (see Eq.~\eqref{eq:third-order-circuit-bulk}).
    We first define the set of \emph{$S$-symmetries} for each $b$,
    \begin{equation}
    \begin{split}
        \Sym_S^b = \bigg\{f\in \bZ_2^{T_Z}\;|\; S(A+f)-S(A) =\frac{\kappa_b(f)}{2}& \\
        \forall A \colon d_1A = b& \bigg\},
    \end{split}
    \end{equation}
    where $\kappa_b\colon \bZ_2^{\Sym_S^b}\to \bZ_2$ only depends on $b=d_1A$.
    $2\bullet$ denotes the map that identifies integer multiples of $\{0, 1/2\}^\bullet$ with $\{0, 1\}^\bullet$, the canonical representatives of elements in $\bZ_2^\bullet$.
    We denote the set of \emph{twisted $X$-equivalences} as
    \begin{align}
        E_X^b = d_1\,\Sym_S^b = \{d_1 f \in \bZ_2^{T_X}\;|\; f\in \Sym_S^b\}.
    \end{align}
    We say that each $d_X\in \ker(d_1)\cap \Sym_S^b$ defines a \emph{twisted $Z$-detector} via an (affine) linear constraint $\langle c, d_X\rangle = \kappa_b(d_X)$.
    Let $D_X^b = \{d_i\}_i$ be a generating set for $\ker(d_1)\cap \Sym_S^b$. We define the \emph{twisted $Z$-syndrome map}
    \begin{align}
    \begin{split}
        \partial_0^b\colon \bZ_2^{T_Z} &\to \bZ_2^{D_X^b};\\
        c &\mapsto \mqty(\vdots\\ \langle c, d_i\rangle \\ \vdots),
    \end{split}
    \end{align}
    where $\bZ_2^{D_X^b}$ is the (abstract) space spanned by all $Z$ detectors.
    For a given vector of $X$-measurement outcomes $m_x\in \bZ_2^{T_Z}$ we define the associated \emph{syndrome} $s_x = \partial_0^b m_x$.
    For later convenience, we also introduce its transpose map explicitly as $d_0^b=(\partial_0^b)^T$.
\end{definition}

\begin{remark}
    We add the word ``twisted'' in the paragraphs and definition above to highlight that the spaces depend on $b$ explicitly. In particular this implies that the space of $Z$ detectors and equivalences depends on the $X$ errors and $Z$-measurement outcomes explicitly. 
    This is in line with the conclusion in Ref.~\cite{bombin2018propagation} which considered transversal third-order gates acting on a state with residual errors and found that not all stabilizers are preserved by the logic gate.
    We want to highlight, however, that we do not require the transversal implementation of the third-order gate but can capture any third-order circuit and extract its QEC properties.
\end{remark}

\begin{corollary}
    $\bZ_2^{D_X^b}$ is a subspace of $\bZ_2^{T_Z}$ and $E_X^b$ is a subspace of $\bZ_2^{T_X}$.
\end{corollary}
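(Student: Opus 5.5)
The plan is to reduce both claims to closure under addition, since over $\bZ_2$ a nonempty subset closed under addition is a subspace. First, $0\in \Sym_S^b$ with $\kappa_b(0)=0$, because $S[A+0]-S[A]=0$. Hence $0$ lies in $\ker(d_1)\cap\Sym_S^b$ and in $E_X^b$. Strictly speaking, $D_X^b$ is a generating set. I read the first statement as saying that the set it generates, $\ker(d_1)\cap \Sym_S^b$, is a subspace of $\bZ_2^{T_Z}$, and that $\kappa_b$ is linear on it. Only then is the affine constraint $\langle c,d_X\rangle=\kappa_b(d_X)$ consistent on spans, so that $\bZ_2^{D_X^b}$ embeds linearly in $\bZ_2^{T_Z}$.

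For the first claim, take $f,g\in\ker(d_1)\cap\Sym_S^b$ and any $A$ with $d_1A=b$. Telescope:
\begin{align*}
S[A+f+g]-S[A] = \big(S[(A+g)+f]-S[A+g]\big) + \big(S[A+g]-S[A]\big).
\end{align*}
Because $g\in\ker(d_1)$, the shifted configuration still satisfies $d_1(A+g)=b$. The defining property of $f$ therefore applies at $A+g$, and the first bracket equals $\kappa_b(f)/2$. The second bracket equals $\kappa_b(g)/2$. So $f+g\in\Sym_S^b$ with $\kappa_b(f+g)=\kappa_b(f)+\kappa_b(g)$ mod $2$, which gives closure and the linearity of $\kappa_b$. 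The key point here is that elements of $\ker(d_1)$ preserve the constraint surface $\{A: d_1A=b\}$.

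The second claim, that $E_X^b=d_1\Sym_S^b$ is a subspace, is where I expect the real obstacle. For a general $g\in\Sym_S^b$, the shifted $A+g$ lies on the surface $d_1(A+g)=b+d_1g$. The hypothesis on $f$ gives no information there, so the telescoping above breaks down. My first attempt would exploit that $S$ is a third-order function (App.~\ref{app:ithorder}). Then the mixed difference
\begin{align*}
\Delta_{f,g}[A]=S[A+f+g]-S[A+f]-S[A+g]+S[A]
\end{align*}
is a first-order function of $A$, i.e.\ affine with values in $\tfrac12\bZ/\bZ$ up to a constant. It is also symmetric in $f\leftrightarrow g$. Writing
\begin{align*}
S[A+f+g]-S[A] = \tfrac{\kappa_b(f)}{2}+\tfrac{\kappa_b(g)}{2}+\Delta_{f,g}[A],
\end{align*}
it suffices to show that $\Delta_{f,g}$ is constant, with value in $\{0,\tfrac12\}$, on $\{d_1A=b\}$. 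Along directions $h\in\ker(d_1)$ we have $\Delta_{f,g}[A+h]-\Delta_{f,g}[A]=D_hD_fD_gS$. I would try to show this vanishes using that $D_fS$ and $D_gS$ are constant on the affine surface, so that their derivatives along $\ker(d_1)$ vanish.

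If this third-derivative argument does not close, the fallback is to observe that only the span matters for everything downstream. In that case the statement is understood with $\Sym_S^b$ replaced by the subspace generated by the $S$-symmetries, and the assertion that $E_X^b$ is a subspace becomes the elementary fact that the image of a subspace under the linear map $d_1$ is a subspace.
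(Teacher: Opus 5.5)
Your argument for the first claim is correct. It is what the paper's one-line proof means by ``follows from the definition'': the telescoping works because $g\in\ker(d_1)$ keeps $A+g$ on the surface $\{d_1A=b\}$. It also gives you, for free, linearity of $\kappa_b$ on $\ker(d_1)\cap\Sym_S^b$.

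For the second claim there is a genuine gap, and it cannot be filled, because the claim is false as stated. The paper derives it ``from linearity of $d_1$''. That presupposes $\Sym_S^b$ is closed under addition, which is exactly the step you flagged as not following, so the paper's proof has the same hole.

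Your third-derivative route fails for a concrete reason. Constancy of $D_fS$ and $D_gS$ on $A_0+\ker(d_1)$ only forces $\Delta^{(3)}S(h,h',f)=0$ when two arguments lie in $\ker(d_1)$. You need $\Delta^{(3)}S(h,f,g)=0$ with only $h\in\ker(d_1)$, and nothing constrains that.

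Here is a counterexample. Take $T_Z=\{1,2,3\}$, $d_1A=(A_2,A_3)$ so that $\ker(d_1)=\{0,e_1\}$, $S[A]=\tfrac12\overline{A}_1\overline{A}_2\overline{A}_3$, and $b=0$. Up to bookkeeping, this is the circuit that prepares qubit 1 in $\ket{+}$ and qubits 2, 3 in $\ket{0}$, applies one $CCZ$, and measures qubit 1 in $X$ and qubits 2, 3 in $Z$.
\begin{itemize}
\item The surface is $\{0,e_1\}$ and $S$ vanishes on it, so $S$ is even a local cohomology invariant.
\item $f\in\Sym_S^0$ iff $S[f]=S[f+e_1]$, i.e.\ iff $f_2f_3=0$.
\item Hence $e_2,e_3\in\Sym_S^0$ but $e_2+e_3\notin\Sym_S^0$, with $\Delta^{(3)}S(e_1,e_2,e_3)=\tfrac12$.
\item Consequently $E_X^0=\{(0,0),(1,0),(0,1)\}$, which is not a subspace.
\end{itemize}
Physically, $X_2$ or $X_3$ alone can be moved through the $CCZ$ because the other target is in $\ket{0}$. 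But $X_2X_3$ kicks back a $Z_1$, which flips the $X$ outcome on qubit 1.

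So your fallback of passing to the span is a change of definition, not a proof. Worse, it would label non-equivalences such as $d_1(e_2+e_3)$ as twisted equivalences. The cleaner repair is to drop the second claim. Nothing downstream uses $E_X^b$: Thm.~\ref{thm:twistedconstraints}, Lem.~\ref{lem:probabilities} and the algorithms only need $\ker(d_1)\cap\Sym_S^b$ to be a subspace with $\kappa_b$ linear on it, which your first argument establishes.
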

\begin{proof}
    Need to check closure of $D_X^b$ and $E_X^b$ under addition in $\bZ_2^{T_Z}$ and $\bZ_2^{T_X}$, respectively. The former follows from the definition and the latter statement then from linearity of $d_1$.
\end{proof}

\begin{lemma}\label{lem:Zdetectorswork}
    Let $m_x \in \bZ_2^{T_Z}$ be a vector of $x$-measurement outcomes with syndrome $s_x$ and $e_z\in\bZ_2^{T_Z}$ a vector of $Z$ errors.
    There is a linear relation between the $Z$ errors and the syndrome,
    \begin{align}\label{eq:Z-twisted-constraints}
        \partial_0^b e_z = s_x + \kappa_b,
    \end{align}
    where $\kappa_b$ is interpreted as a vector in $\bZ_2^{D_X^b}$.
\end{lemma}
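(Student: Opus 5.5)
The plan mirrors the proof of Lem.~\ref{lem:Xdetectorswork}. First we derive, for each twisted $Z$-detector $d_X\in\ker(d_1)\cap\Sym_S^b$, an affine constraint $\langle c, d_X\rangle = \kappa_b(d_X)$ on the full chain $c$. Then we split $c = e_z+m_x$ as in Eq.~\eqref{eq:chargeandflux}.

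For the first step we start from the path integral, Eq.~\eqref{eq:third-order-circuit-bulk}, and apply the shift identity Eq.~\eqref{eq:firing-Xs-third-order} with $t=d_X$. Since $d_X\in\ker(d_1)$, the shift $A\mapsto A+d_X$ maps the summation domain $\{A: d_1A=b\}$ bijectively onto itself. Since $d_X\in\Sym_S^b$, we have $e^{2\pi i S[A+d_X]} = e^{2\pi i S[A]}(-1)^{\kappa_b(d_X)}$ for every $A$ in that domain, and the sign is independent of $A$. The character factor contributes $(-1)^{\langle c,d_X\rangle}$. Together this gives
\begin{align}
C_{b,c} = (-1)^{\langle c,d_X\rangle+\kappa_b(d_X)}\, C_{b,c}.
\end{align}
So either $C_{b,c}=0$, or $\langle c,d_X\rangle=\kappa_b(d_X)$. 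The same argument applies with open input/output edges, because $C_{b,c}$ is then an operator and the identity holds entrywise.

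Next we argue that a configuration $(b,c)$ with $C_{b,c}=0$ occurs with probability zero, since its Born weight vanishes. Hence every realized run satisfies $\langle c, d_X\rangle = \kappa_b(d_X)$ for all $d_X$ in the generating set $D_X^b$. Collecting these into a vector gives $\partial_0^b c = \kappa_b$. Here we need $\kappa_b$, restricted to $\ker(d_1)\cap\Sym_S^b$, to be linear, so that it is a well-defined element of $\bZ_2^{D_X^b}$. Linearity follows from the defining relation: add the identities for $f$ and $f'$ evaluated at $A$ and at $A+f$, and use that $d_1(A+f)=b$ for $f\in\ker(d_1)$.

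Finally, linearity of $\partial_0^b$ gives $\partial_0^b e_z + \partial_0^b m_x = \kappa_b$, which is $\partial_0^b e_z = s_x+\kappa_b$ over $\bZ_2$.

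The main obstacle is the step asserting that realized configurations have $C_{b,c}\neq 0$. This needs a brief justification that outcome probabilities are proportional to norms of $C_{b,c}$ applied to the input state. It also needs care with open boundaries, where a detector $d_X$ must not touch input or output edges. I would restrict $\ker(d_1)$ to the interior variables, consistent with the convention stated at the start of Sec.~\ref{sec:pathintegral}.
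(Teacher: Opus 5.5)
Your proof is correct and is essentially the paper's proof, which just defers to the analogy with Lem.~\ref{lem:Xdetectorswork}. As in the paper, the affine constraint $\langle c,d_X\rangle=\kappa_b(d_X)$ for $d_X\in\ker(d_1)\cap\Sym_S^b$ comes from the shift identity Eq.~\eqref{eq:firing-Xs-third-order}, and splitting $c=e_z+m_x$ gives the claim; the additional points you supply (realized runs have $C_{b,c}\neq 0$, $\kappa_b$ is linear on $\ker(d_1)\cap\Sym_S^b$, and detectors stay off the open boundary variables) are left implicit in the paper and are all sound.
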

\begin{proof}
    By definition, the proof proceeds analogously to the proof of Lem.~\ref{lem:Xdetectorswork}.
\end{proof}

Often, a circuit with $b\neq 0$ can detect strictly fewer $Z$ errors, i.e. $\ker(\partial_0)\subsetneq \ker(\partial_0^b)$.
In that case, we want to give the new set of undetectable errors a name.

\begin{definition}[twisted $Z$ errors]\label{def:twistedZerrors}
    Assume $\ker(\partial_0^{b=0}) \subseteq \ker(\partial_0^b)$.
    Consider a decomposition $\ker(\partial_0^b) =\ker(\partial_0^{b=0}) \oplus \mathcal{E}_Z^b $.
    Any $\mathcal{E}_Z^b$, defined by a set of generators $\{c_{\mathrm{tw}}\}_{c_{\mathrm{tw}}}$, is called the space of \emph{twisted errors}.
    Two $\mathcal{E}_Z^b$ are considered equivalent if their respective generators $c_{\mathrm{tw}}$ differ by elements in $\ker(\partial_0^{b=0})$.
\end{definition}

\begin{remark}
Lem.~\ref{lem:Zdetectorswork} suggests that when running the third-order circuit with measurement outcomes $m_x$, a decoder has to solve Eq.~\eqref{eq:Z-twisted-constraints} for $e_z$.
However, this is not realistically possible as the linear constraints themselves depend on $b$, a quantity that includes unknown Pauli-$X$ errors. In particular, we cannot assume any precise knowledge about the location of Pauli-$X$ errors in the circuit.
Hence, any decoder must make an assumption on the precise structure of $b$.
A naive decoder, for example, would just ignore the dependence on $b$ and assume that the previous decoding step identified the errors perfectly, e.g. $b=0$.
A more sophisticated method involves partial knowledge about $b$, since it also includes the corrections that were done.
Moreover, it can make a guess on the error $e_x$, and decode based on a ``guessed'' decoding graph characterized by another parity-check matrix $\tilde{\partial}_0^b$.
This approach was used in Ref.~\cite{highthresholdJIT} and significantly improved the threshold.
\end{remark}

\subsection{Fault-tolerant third-order circuits}\label{sec:FT-thirdorder}
In this section we show that in the case where $b=0$, our framework provides an algebraic characterization of a \emph{fault-tolerant} insertions of non-Clifford gates, in terms of a cohomology invariant of an underlying 4-term chain complex.
We show that this also guarantees that localized errors in the circuit cannot cause any undetectable logical error and derive our main results that we use for our simulator in Sec.~\ref{sec:decoding-sim}.

To start, consider a CSS circuit, in which $S[A]\equiv 0$.
In that case $D_X^b = \ker(d_1) =: D_X$, independent of $b$.
We now collect the algebraic data above into a single mathematical object. This will allow us to identify special classes of functions $S$ for which the calculation of twisted detectors and equivalences is tractable for certain $b$.
In practice, the circuit must include adaptive corrections, decided on by a decoder, to guarantee that $b$ is localized sufficiently to preserve the fault-distance scaling of the circuit and to allow for efficient simulability of the measurement and error statistics.

The spaces and maps between them that were introduced in the previous two sections can be combined to a sequence consisting of 4 terms
\begin{align}\label{eq:faultcomplex-full}
    \bZ_2^{D_X} \stackrel{d_0}{\longrightarrow} \bZ_2^{T_Z} \stackrel{d_1}{\longrightarrow} \bZ_2^{T_X} \stackrel{d_2}{\longrightarrow} \bZ_2^{D_Z}.
\end{align}
It holds that $d_{i+1}\circ d_{i} = 0\qcomma i=0,1$ which makes the above sequence a (co)chain complex where we denote the elements in the left-most space as 0-(co)chains and the other ones in increasing order from left to right.
This is exactly the \emph{fault-complex} of the underlying CSS circuit~\cite{Hillmann2025singleshot}.

We consider the ensemble of (post-selected) circuits $\{C_{b,c}\}_{b,c}$.
In each run, a different post-selection is observed, which we call a \emph{trajectory}.
In the following, we consider families of third-order circuits defined by families of bi-partite CSS tensor networks $[T_Z\stackrel{d_1}{\to}T_X]$ and third-order functions $S$.
The family is indexed by the \emph{spacetime volume} $\mathrm{Vol} = \abs{T_Z}\abs{T_X}$.

For the rest of this paper we consider a \emph{fault tolerant LDPC} family of circuits, a family of circuits such that the associated complex has sparse coboundary operators and increasing
\begin{equation*}
\begin{split}
    \qq{1-systole} d_Z =& \min(\abs{c}\;|\; c\in \ker(d_0^T)\backslash \im(d_1^T))\\
    \qq{and 2-cosystole} d_X =& \min(\abs{b}\;|\; b\in \ker(d_2)\backslash \im(d_1)),
\end{split}
\end{equation*}
where $\abs{\bullet}$ denotes the $l_1$-norm of a binary vector.
As derived above, these correspond to the distances of the pure $Z$ and $X$ decoding problems.
The actual fault-distance of the entire circuit is ill-defined in general and only upper bounded by $\min(d_X,d_Z)$, as non-trivial $X$ errors in the circuit lead to a reduction of the fault distance against $Z$ errors, due to twisted errors.

To obtain the correct measurement distribution of $X$ measurements, the twisted detectors have to be calculated for the $b$ in question.
A special class of functions for which this is tractable for the important sets of $b$, for which the $X$ decoding succeeds and the fault-distance does not drop to 1, is the class of \emph{local cohomology invariants} of the chain complex in Eq.~\eqref{eq:faultcomplex-full}, where the third-order gates have sparse support. We do not demand geometric locality.

\begin{definition}[local cohomology invariant]\label{def:local-invariant}
A function $S:\bZ_2^{T_Z}\to \bR/\bZ$ for which
\begin{align}
    S[A]  = S[A+d_0v] \quad \forall A\colon d_1A = 0,
\end{align}
is a \emph{cohomology invariant}.
A function $S:\bZ_2^{T_Z}\to \bR/\bZ$ that can be written in the form
\begin{align}
    S = \sum_{\ell\in \mathcal{P}(T_Z)} S_\ell,
\end{align}
where $\mathcal{P}(T_Z)$ denotes the powerset of $T_Z$, with
\begin{itemize}
    \item $\sup(S_{\ell}) = \ell$,
    \item \emph{(finite range)} There exists a global constant $r\in \bZ^+$ such that $S_{\ell} = 0\;\forall \ell\colon \abs{\ell}>r$, and
    \item \emph{(bounded gate density)} There exists a global constant $\rho\in \bZ^+$ such that for any $t\in T_Z$, $\abs{\{S_{\ell}\neq 0\;|\; t\in \ell\}}<\rho$
\end{itemize}
is called \emph{local}.
\end{definition}

\begin{corollary}\label{cor:SzeroCoboundaries}
    A third-order cohomology invariant vanishes on coboundaries, $S[d_0 v] = 0\;\forall v\in \bZ_2^{D_X}$.
\end{corollary}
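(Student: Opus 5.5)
Apply Def.~\ref{def:local-invariant} at the single configuration $A=0$ and combine this with the normalization of $S$ at the origin.

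First, check that $A=0$ is an admissible choice in the definition. The definition only constrains configurations with $d_1A=0$, and $d_1 0 = 0$ trivially. Since $d_1\circ d_0=0$ by the chain-complex property of Eq.~\eqref{eq:faultcomplex-full}, the shifted configuration $0+d_0v=d_0v$ also satisfies $d_1(d_0v)=0$. The invariance condition therefore applies and gives, for every $v\in\bZ_2^{D_X}$,
\begin{align}
    S[d_0v] = S[0+d_0v] = S[0].
\end{align}

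Second, show that $S[0]=0$. I would read this off the explicit phase polynomial, Eq.~\eqref{eq:spacetime-3rd-phasepoly}. Every monomial there contains at least one factor $\overline{A}_i$, so every term vanishes at $A=0$, and hence $S[0]=0\in\bR/\bZ$. If instead one only assumes $S$ is a general third-order function (App.~\ref{app:ithorder}), I would invoke the convention that such functions are normalized with $S[0]=0$. This is harmless because a constant offset only contributes a global phase to $C_{b,c}$ in Eq.~\eqref{eq:third-order-circuit-bulk}, and that phase is already absorbed into the irrelevant normalization.

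Combining the two steps gives $S[d_0v]=0$ for all $v$, which is the claim. There is no real obstacle here. The only point needing care is the normalization $S[0]=0$: it is what upgrades ``constant on coboundaries'' to ``vanishes on coboundaries.'' I would state explicitly that it holds for the phase-polynomial form used throughout the paper.
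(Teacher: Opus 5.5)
Your proposal is correct and is the same argument as the paper's: invariance at $A=0$ gives $S[d_0v]=S[0]$, and $S[0]=0$ because every monomial in Eq.~\eqref{eq:spacetime-3rd-phasepoly} contains a factor $\overline{A}_i$. Checking that $d_1d_0v=0$ is not actually needed, since the definition only constrains the base point $A$, but it does no harm.
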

\begin{proof}
    Follows directly from the general form of Eq.~\eqref{eq:spacetime-3rd-phasepoly} and that $S[d_0 v] = S[0]$.
\end{proof}

The main reason we consider third-order circuits defined by cohomology invariants is that the space of twisted detectors is related to the space of detectors in the underlying CSS circuit (``untwisted detectors'').
We start with stating the simple case, when $b=0$.

\begin{lemma}
    Let $C_{0,c}$ be a trajectory of a third-order circuit with a third-order cohomology invariant, and $b=0$. In that case, the space of detectors is independent of $S$. It holds that $D_X^{b=0} = D_X$, from Eq.~\eqref{eq:faultcomplex-full}.
\end{lemma}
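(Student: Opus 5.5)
We need to show that $\ker(d_1)\cap \Sym_S^{b=0} = \ker(d_1)$, with $\kappa_0\equiv 0$ on it. Here we identify the abstract detector space $D_X$ in Eq.~\eqref{eq:faultcomplex-full} with $\im(d_0)=\ker(d_1)$, generated by elements $d_0 v$. Since $\ker(d_1)\cap\Sym_S^{0}\subseteq\ker(d_1)$ trivially, the only content is the reverse inclusion. Concretely, for every $f\in\ker(d_1)$ we must show that $S[A+f]-S[A]$ takes a fixed value in $\{0,1/2\}$ for all $A$ with $d_1A=0$, and that this value is in fact $0$.

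The plan is to invoke Def.~\ref{def:local-invariant} directly. Take $f=d_0 v$ for some $v\in\bZ_2^{D_X}$ and $A$ with $d_1A=b=0$. The cohomology-invariance condition gives $S[A+d_0v]=S[A]$ for all such $A$, so $S[A+f]-S[A]=0$. Hence $f\in\Sym_S^{0}$ with $\kappa_0(f)=0$. This shows $\ker(d_1)\subseteq\Sym_S^{0}$, so $\ker(d_1)\cap\Sym_S^0=\ker(d_1)$, and the twisted detector space coincides with $D_X$. The affine shift also vanishes, so Lem.~\ref{lem:Zdetectorswork} reduces to the untwisted CSS constraint $\partial_0 e_z=s_x$, with no dependence on $S$. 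Cor.~\ref{cor:SzeroCoboundaries} gives a consistency check: setting $A=0$ yields $S[d_0v]=S[0]=0$, so the constant shift is indeed zero rather than $1/2$.

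The main obstacle is the identification of $\ker(d_1)$ with $\im(d_0)$. The complex \eqref{eq:faultcomplex-full} only gives $\im(d_0)\subseteq\ker(d_1)$ a priori, so the equality has to be fixed by a convention. The paper defines the $Z$-detectors abstractly via a generating set of $\ker(d_1)$, and I would take $d_0$ to be the map whose columns are these generators. This makes the complex exact at $\bZ_2^{T_Z}$, and every $f\in\ker(d_1)$ is then of the form $d_0v$. I would state this convention explicitly at the start of the proof. With it in place, the argument is a one-line application of the invariance condition.
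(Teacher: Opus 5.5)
Your argument is correct and is the same as the paper's one-line proof: cohomology invariance applied to any $A$ with $d_1A=0$ gives $S[A+d_0v]=S[A]$, so every $d_0v$ lies in $\Sym_S^{0}$ with $\kappa_0=0$, and hence $\ker(d_1)\cap\Sym_S^{0}=\ker(d_1)$. The identification $\im(d_0)=\ker(d_1)$ that you flag is not an extra convention you need to impose. It is built into the paper's definitions ($D_X:=\ker(d_1)$ for the CSS circuit, and $d_0=(\partial_0)^T$ with the rows of $\partial_0$ generating $\ker(d_1)$), so it is exactly what the paper's proof uses implicitly.
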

\begin{proof}
    Follows directly from the fact that $S$ is a cohomology invariant, since $d_1A = 0$.
\end{proof}

In the following, we present when and how the space of twisted detectors can be computed for non-trivial $b$.
We consider the setting in which all gates and constraints in the circuit are local.

We say a family of third-order circuits is \emph{sparse} or \emph{LDPC} if all the coboundary operators in the associated fault complex (see Eq.~\eqref{eq:faultcomplex-full}) have row- and column weight of $O(1)$ and $S$ is a local cohomology invariant.

In LDPC circuits we can bound the impact of a non-trivial 2-cocycle $b$ on the distribution of 1-chains, the $Z$-like errors.
To that end, we consider a trivial 2-cocycle, $b=d_1A \in\ker(d_2)$ for some \emph{filling} 1-chain $A$ and define condition C, a sufficient condition that the third-order circuit is fully characterized by twisted constraints.

\begin{definition}[Condition C]\label{def:condition-C}
Let $C_{b,c}$ be a third-order circuit with $b\in\im(d_1)$.
We say it fulfils \emph{condition C} if there exists $A\in \bZ_2^{T_Z}$ such that $d_1A=b$ and the 1-cycles that are contained in the union of all subsets $\ell\subseteq T_Z$ from Def.~\ref{def:local-invariant} that intersect with the support of $A$ only contain trivial 1-cycles, elements in $\im(d_0)$.
\end{definition}

\begin{remark}
    Def.~\ref{def:condition-C} can be interpreted as saying that the code obtained from puncturing the quantum code obtained around $\bZ_2^{T_Z}$ from puncturing onto the support of $A$ in the $X$ basis does not contain any non-trivial logical $Z$ operators.
    In this case, the punctured code can be obtained by measuring out the complement of $\sup(A)$ in the $X$ basis.
\end{remark}

\begin{remark}
    The set of fillings for a fixed $b$ can be sorted into relative cohomology classes and they are in general non-canonically in bijection with the cohomology classes for $b=0$.
    Condition C can be viewed as a canonical way to define the trivial class in the relative cohomology classes.
\end{remark}

\begin{definition}[$M$-matrices]\label{def:Mmatrices}
    Let $C_{b,c}$ be a third-order circuit fulfilling condition C.
    We define the matrix $M^b\colon \bZ_2^{D_X} \to \bZ_2^{D_X}$ by
    \begin{align}
        M^b_{w, v} = \langle w, M^b v\rangle = 2(\Delta^{(3)}S)(A,d_0v, d_0w),
    \end{align}
    for $A\colon d_1A = b\qcomma v,w\in \bZ_2^{D_X}$, where $2\bullet$ denotes the map that identifies integer multiples of $\{0, 1/2\}$ with $\{0, 1\}\simeq\bZ_2$, and a related map $\widetilde{M}^b\colon \bZ_2^{D_X} \to \bZ_2^{T_Z}$ via
    \begin{align}
        \widetilde{M}^b_{e,v} = \langle e, \widetilde{M}^bv\rangle = 2(\Delta^{(3)}S)(A_0, e, d_0v),
    \end{align}
    for $v\in \bZ_2^{D_X},e\in \bZ_2^{T_Z}$.
    We call these two matrices \emph{$M$-matrices}.
\end{definition}

We use the $M$-matrices to characterize the twisted constraints and errors.
The main properties of twisted constraints are summarized in the next theorem.

\begin{theorem}\label{thm:twistedconstraints}
    Let $C_{b,c}$ be a third-order circuit which fulfills condition C with a filling $A_0\colon d_1A_0=b$.
    Consider the $M$-matrices as defined in Def.~\ref{def:Mmatrices}.
    Then,
    \begin{itemize}
        \item The space of twisted $Z$ detectors is given by
        \begin{align}
            \bZ_2^{D_X^b} = \ker(M^b),
        \end{align}
        \item The twisted constraint matrix is given by
        \begin{align}
            \partial_0^b = K^b d_0^T,    
        \end{align}
        where $(K^b)^T\colon \bZ_2^{D_X^b}\to \bZ_2^{D_X}$ is an injective map.
        \item A valid choice of twisted $Z$ errors is
        \begin{align}
            \mathcal{E}_Z^b = \im(\widetilde{M}^b),
        \end{align}
        i.e. an i.i.d random twisted error together with a deterministic error $c_0$, such that $\partial_0^b c_0=\kappa_b$, reproduces the syndrome distribution.
        \item The entries of expected twisted detector values are
        \begin{align}
            (\kappa_b)_{i} = 2(\Delta^{(2)}S)(A_0, d_0(K^b)^T\vb{e}_i),    
        \end{align}
        where $\vb{e}_i$ denotes the basis vector labelled with $i\in D_X^b$.
        It is independent of the chosen contractible representative for $A_0$ as long as it fulfils condition C.
    \end{itemize}
\end{theorem}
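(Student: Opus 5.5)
The plan is to expand $S[A+f]-S[A]$ for $A$ in the affine space $\{A: d_1A=b\}$ using the finite-difference (derivative) calculus for third-order functions from App.~\ref{app:ithorder}. Write every such $A$ as $A = A_0 + a$ with $d_1 a = 0$. By the structure of ker$(d_1)$ inside the fault complex, $a = d_0 v + h$ with $h$ a representative of a nontrivial 1-cycle class. Since $S$ is third order, $\Delta^{(3)}S$ is trilinear and symmetric (valued in $\{0,1/2\}$), $\Delta^{(2)}S$ is affine in each argument up to $\Delta^{(3)}$ terms, and $\Delta^{(4)}S=0$. For a candidate detector $f=d_0w\in\ker(d_1)$ I would expand
\begin{align*}
S[A_0+a+d_0w]-S[A_0+a] = (\Delta S)(0,d_0w) + (\Delta^{(2)}S)(A_0+a,d_0w),
\end{align*}
and then split $(\Delta^{(2)}S)(A_0+a,d_0w)$ as $(\Delta^{(2)}S)(A_0,d_0w)+(\Delta^{(2)}S)(a,d_0w)+(\Delta^{(3)}S)(A_0,a,d_0w)$.

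The $b=0$ pieces can be eliminated using the cohomology invariance of $S$. For $a\in\ker(d_1)$, invariance together with Cor.~\ref{cor:SzeroCoboundaries} should make $(\Delta S)(0,d_0w)$ and $(\Delta^{(2)}S)(a,d_0w)$ vanish. This is exactly the lemma $D_X^{b=0}=D_X$, applied with the difference taken at base point $a$. What remains is
\begin{align*}
S[A+d_0w]-S[A]=(\Delta^{(2)}S)(A_0,d_0w)+(\Delta^{(3)}S)(A_0,d_0v+h,d_0w).
\end{align*}
Condition C now enters. Locality means $(\Delta^{(3)}S)(A_0,\cdot,\cdot)$ only sees the restriction of its arguments to the union $U$ of gate supports $\ell$ meeting $\sup(A_0)$. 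Condition C says every 1-cycle restricted there is trivial, so the $h$-dependence can be absorbed into a coboundary (or vanishes). The $v$-dependence is precisely $\langle w,M^b v\rangle/2$.

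It follows that $d_0w$ is a twisted detector, i.e. the difference is independent of $A$, iff $M^b w=0$, using the symmetry of $M^b$. Note that $f$ must lie in $\ker d_1$, and $\ker d_1$ modulo the trivial class is handled by the same argument. This gives $\bZ_2^{D_X^b}=\ker(M^b)$. The inclusion $\ker M^b\hookrightarrow\bZ_2^{D_X}$ is $(K^b)^T$, and then $\partial_0^b c=\langle c,d_0(K^b)^T\cdot\rangle = K^b d_0^T c$. The constant term gives $(\kappa_b)_i=2(\Delta^{(2)}S)(A_0,d_0(K^b)^T\vb{e}_i)$.

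Independence of $\kappa_b$ from the choice of $A_0$ should follow similarly. Two fillings satisfying condition C differ by a cycle which, on the relevant region, is a coboundary $d_0u$. The change in $\kappa_b$ is then $(\Delta^{(3)}S)(A_0,d_0u,d_0w)$ plus $b=0$ terms, which vanish for $w\in\ker M^b$.

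For the twisted errors, I want $\ker(\partial_0^b)=\ker(d_0^T)\oplus\im(\widetilde M^b)$ modulo $\ker d_0^T$. Note that $d_0^T\widetilde M^b = M^b$ by definition, since $\langle d_0 w,\widetilde M^b v\rangle=M^b_{w,v}$. Therefore $K^b d_0^T\widetilde M^b=K^bM^b=0$, because $M^b$ is symmetric and $K^b$ restricts to $\ker M^b=(\im M^b)^\perp$. A dimension count gives $\dim\ker\partial_0^b-\dim\ker d_0^T=\operatorname{rank}(d_0^T)-\operatorname{rank}(K^bd_0^T)=\operatorname{rank}M^b$, which equals $\operatorname{rank}\widetilde M^b$ modulo $\ker d_0^T$. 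For the distributional claim I would invoke the flatness result (Lem.~\ref{lem:probabilities}): the outcome is uniform on the affine solution set, so adding a uniformly random element of $\im\widetilde M^b$ to a fixed $c_0$ reproduces it.

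The main obstacle is the use of condition C to kill the nontrivial-cycle contributions $h$. One must show carefully that, restricted to the neighbourhood $U$ of $\sup(A_0)$, every element of $\ker d_1$ agrees with some $d_0 u$, so that the trilinear term reduces to $M^b$. This requires $\Delta^{(3)}S(A_0,\cdot,\cdot)$ to depend only on restrictions to $U$, which follows from sparsity of the gates.
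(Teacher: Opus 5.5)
Your proof is correct and rests on the same identities as the paper's: cohomology invariance kills the $b=0$ terms so that $M^b_{v,w}=2[(\Delta^{(2)}S)(A,d_0v)-(\Delta^{(2)}S)(A+d_0w,d_0v)]$; symmetry of $M^b$ gives $\ker K^b=\im M^b$; $d_0^T\widetilde M^b=M^b$; and Lem.~\ref{lem:probabilities} gives flatness. Where you differ is in how fillings from different relative classes are handled, and hence where condition C is used.

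\emph{The paper.} It varies $A$ only by coboundaries, which gives the inclusion of twisted detectors into $\ker M^b$. It lets the boundary $\delta$'s in the path integral fix a single class, and it invokes condition C only for the support of $\widetilde M^b$ and for representative-independence of $\kappa_b$.

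\emph{Your route.} You show directly that $S[A+d_0w]-S[A]$ is constant on all of $A_0+\ker d_1$ for $w\in\ker M^b$. This works because $(\Delta^{(3)}S)(A_0,\cdot,\cdot)$ only sees restrictions to $U$, and condition C lets you replace a non-trivial $h$ by some $d_0u$ there. It yields both inclusions of the first bullet and the independence of $\kappa_b$ in one step, and your dimension count makes the third bullet more explicit than the paper's version.

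\emph{The step you flag as the main obstacle.} This is not something to derive. It is exactly condition C in the form tested by $\CheckCondC$, namely $\ker(d_1)|_U=\im(d_0)|_U$, i.e.\ vanishing $H_1$ of the $X$-punctured complex. Equivalently, by puncturing/shortening duality, every element of $\ker d_0^T$ supported in $U$ lies in $\im d_1^T$. It would not follow from the reading ``elements of $\ker d_1$ supported in $U$ are coboundaries'', since a non-trivial cocycle can cross $U$ without fitting inside it. You should state explicitly which form of condition C you use.

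\emph{Two small fixes.} First, membership in $\Sym_S^b$ also needs the constant to lie in $\{0,1/2\}$. This follows from $2(\Delta^{(2)}S)(A_0,d_0w)=(\Delta^{(3)}S)(A_0,d_0w,d_0w)$ together with $M^b_{w,w}=0$. Second, drop the aside that non-trivial classes as candidate detectors are ``handled by the same argument''. They are not: the term $(\Delta^{(2)}S)(a,h)$ encodes the logical action and need not vanish. The aside is also unnecessary, since the theorem only concerns detectors of the form $d_0w$.
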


This theorem summarizes how to extract all the necessary data to simulate the decoding problem of a third-order circuit.
It relies on first simulating the $X$ decoding and, upon success, the correct distribution of $Z$ errors can be sampled, by computing the $M$-matrices.

In order to prove that theorem we need a small corollary.

\begin{corollary}\label{cor:prop-Mb}
    The matrices in Def.~\ref{def:Mmatrices} have the following properties:
    \begin{itemize}
        \item $M^b$ is symmetric, i.e. $M^b_{v,w} =M^b_{w, v}$.
        \item $M^b$ is independent of the chosen representative $A\in \{A\;|\; d_1A = b\}$.
\end{itemize}
\end{corollary}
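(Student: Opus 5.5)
The plan is to use only the multilinearity and symmetry of the third-order difference operator $\Delta^{(3)}S$ (App.~\ref{app:ithorder}), together with the cohomology invariance of $S$ (Def.~\ref{def:local-invariant}) and Cor.~\ref{cor:SzeroCoboundaries}.

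For a third-order function $S$, the derivative $\Delta^{(3)}S(x,y,z)$ is a trilinear form valued in $\{0,1/2\}\subset\bR/\bZ$ and is symmetric under permutations of its arguments. Symmetry of $M^b$ is then immediate. Indeed,
\[
M^b_{w,v}=2(\Delta^{(3)}S)(A,d_0v,d_0w)=2(\Delta^{(3)}S)(A,d_0w,d_0v)=M^b_{v,w}
\]
by invariance under swapping the last two arguments. If the appendix only defines $\Delta^{(3)}$ via iterated differences $\Delta_z\Delta_y\Delta_x S$, I will first recall that finite differences commute, $\Delta_x\Delta_y=\Delta_y\Delta_x$. For a polynomial of degree three as in Eq.~\eqref{eq:spacetime-3rd-phasepoly}, the third difference is independent of the base point and trilinear over $\bZ_2$, which gives the symmetry.

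For independence of the representative, let $A,A'$ satisfy $d_1A=d_1A'=b$, so $z:=A'-A\in\ker(d_1)$. By linearity in the first slot,
\[
M'^b_{w,v}-M^b_{w,v}=2(\Delta^{(3)}S)(z,d_0v,d_0w).
\]
It therefore suffices to show that $\Delta^{(3)}S(z,d_0v,d_0w)=0$ for every 1-cycle $z$. I expand
\[
\Delta^{(3)}S(z,d_0v,d_0w)=\sum_{\epsilon\in\{0,1\}^3}(-1)^{|\epsilon|}S[\epsilon_1 z+\epsilon_2 d_0v+\epsilon_3 d_0w]
\]
(up to sign conventions, with base point $0$). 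Each argument has the form $\epsilon_1 z+d_0u$ with $d_1(\epsilon_1 z)=0$. Cohomology invariance gives $S[\epsilon_1z+d_0u]=S[\epsilon_1 z]$, so the terms pair up: each configuration $(\epsilon_2,\epsilon_3)$ contributes $S[\epsilon_1 z]$ with alternating sign. Summing over $\epsilon_2,\epsilon_3$ therefore gives zero for each fixed $\epsilon_1$.

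The main obstacle I expect is the conventions of $\Delta^{(3)}$. Its base-point independence and trilinearity must be valid for the $\bZ_8$/$\bZ_4$-lifted phase polynomial, where lifts $\overline{A}$ are not additive. I would verify this on the monomials $\overline{A}_i/8$, $\overline{A}_i\overline{A}_j/4$ and $\overline{A}_i\overline{A}_j\overline{A}_k/2$ separately. The key identity is $\overline{x+y}=\overline{x}+\overline{y}-2\overline{x}\,\overline{y}$, and each third difference reduces to a $\bZ_2$-trilinear expression with values in $\tfrac12\bZ/\bZ$. Once this is established, both claims follow as above, and the value $2\Delta^{(3)}S\in\bZ_2$ is well defined.
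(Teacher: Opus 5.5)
Your proof is correct and follows the same route as the paper. Symmetry comes from the permutation symmetry of $\Delta^{(3)}S$ (Cor.~\ref{cor:derivates-symmetric}). Independence of the representative comes from trilinearity, which reduces the difference to $(\Delta^{(3)}S)(A-A',d_0v,d_0w)$; this vanishes on expansion by cohomology invariance and Cor.~\ref{cor:SzeroCoboundaries}. Your explicit pairing of the eight terms just spells out the expansion the paper leaves implicit. Your planned monomial-by-monomial check of trilinearity is unnecessary, since Cor.~\ref{cor:derivates-symmetric} already states that the third derivative of any third-order function is a homomorphism in each argument.
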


\begin{proof}
$M^b$ is symmetric by Cor.~\ref{cor:derivates-symmetric}, as it is defined by a derivative.

To show that $M^b$ is independent of the chosen $A$, we consider the difference of matrix elements obtained from two different $A\neq A'$, with the same coboundary $d_1A = b$, using tri-linearity of $\Delta^{(3)}S$,
\begin{equation}
\begin{split}
    (\Delta^{(3)}S)(A,d_0v, d_0w) - (\Delta^{(3)}S)(A',d_0v, d_0w)\\
    = (\Delta^{(3)}S)(A-A',d_0v, d_0w).
\end{split}
\end{equation}
Expanding the derivative and using cohomology invariance of $S$ and Cor.~\ref{cor:SzeroCoboundaries} shows that this expression evaluates to 0, as the latter two arguments are fixed to coboundaries.
\end{proof}

The correctness of the sampled 1-chains in the simulator is guaranteed by the following lemma.

\begin{lemma}\label{lem:probabilities}
    Let $C_{b,c}$ be a third-order circuit where $b$ fulfills condition C, with filling $A_0$.
    Then, any $c$ that fulfils the twisted constraints in Thm.~\ref{thm:twistedconstraints} appears with the same probability.
\end{lemma}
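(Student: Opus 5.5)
We follow the standard argument for CSS circuits: the probability of a trajectory is $\abs{C_{b,c}}^2$, so it is enough to show that $\abs{C_{b,c}}$ is the same for every $c$ satisfying the twisted constraints $\partial_0^b c = \kappa_b$. Equivalently, at fixed $b$ we show that $\abs{C_{b,c}}$ depends on $c$ only through its twisted syndrome. Here we fix $b$, since the $X$-part of the trajectory was already sampled. We first put $C_{b,c}$ into a canonical form using the filling $A_0$ from condition C. Write $A = A_0 + A'$ with $d_1A' = 0$, and expand $S[A_0+A']$ with the discrete derivatives of App.~\ref{app:ithorder}. This gives $S[A_0] + (\Delta^{(1)}$ terms in $A'$ with coefficients depending on $A_0)$ plus the cubic part $S$ restricted to cycles. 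Condition C controls this expansion on the support of $A_0$.

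\textbf{Step 1: parametrize the cycles.} We decompose $\ker(d_1) = \im(d_0) \oplus L$, where $L$ is a complement spanned by logical representatives. We choose the representatives of $L$ supported away from the region $\bigcup\{\ell : \ell\cap \sup(A_0)\neq\emptyset\}$. Condition C should allow this, because any cycle contained in that region is trivial. Then every $A$ with $d_1A=b$ can be written as $A = A_0 + d_0 v + \lambda$ with $\lambda\in L$. For $\lambda$ supported away from the $A_0$-neighbourhood, locality gives $S[A_0 + d_0v + \lambda] = S[A_0 + d_0v] + S[d_0v+\lambda] - S[d_0v]$. Cohomology invariance (Def.~\ref{def:local-invariant}) together with Cor.~\ref{cor:SzeroCoboundaries} reduces this to $S[A_0+d_0v] + S[\lambda]$.

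\textbf{Step 2: the $v$-dependence.} Since $S$ is third order, $v\mapsto S[A_0+d_0v]-S[A_0]$ expands as $\Delta^{(1)}$, $\Delta^{(2)}$ and $\Delta^{(3)}$ terms in $d_0 v$ with $A_0$ as base point. The pure terms in $d_0v$ vanish by cohomology invariance. What remains is a quadratic form in $v$ over $\bZ_2$: its bilinear part is $\tfrac12 M^b$ (Def.~\ref{def:Mmatrices}, symmetric by Cor.~\ref{cor:prop-Mb}) and its linear part is given by $\Delta^{(2)}S(A_0, d_0 v)$, i.e. the $\kappa_b$ data. The sum then factorizes as
\begin{align*}
C_{b,c}\propto e^{2\pi i S[A_0]}(-1)^{\langle c,A_0\rangle}\Big(\sum_{\lambda\in L} e^{2\pi i S[\lambda]}(-1)^{\langle c,\lambda\rangle}\Big)\sum_{v} (-1)^{q_b(v) + \langle d_0^T c, v\rangle},
\end{align*}
where $q_b$ is the quadratic form just described.

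\textbf{Step 3: evaluate both factors.} For the $v$-sum we use the standard Gauss-sum evaluation over $\bZ_2$. On $\ker(M^b)$ the form $q_b$ is linear with coefficient $\kappa_b$. The sum vanishes unless $d_0^Tc$ restricted to $\ker(M^b)$ matches $\kappa_b$, which is exactly $\partial_0^b c=\kappa_b$ by Thm.~\ref{thm:twistedconstraints}. When it does not vanish, its modulus is $2^{(\dim D_X+\dim\ker M^b)/2}$, independent of $c$. The $\lambda$-factor is the logical amplitude of the clean circuit, i.e. the $b=0$ structure. Its modulus depends on $c$ only through the logical class of $c$, that is its pairing with $L$. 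Summing $\abs{C_{b,c}}^2$ over the open boundary legs and the logical input, using the normalization of the clean circuit, should make the resulting probability constant over solutions.

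\textbf{Main obstacle.} The hardest part is Step 1: justifying that $L$ can be chosen disjoint from the $A_0$-neighbourhood, and that no cross terms $\Delta^{(2)}S(A_0,\lambda)$ or $\Delta^{(3)}S(A_0,d_0v,\lambda)$ survive. The first thing to try is the puncturing reformulation in the Remark after Def.~\ref{def:condition-C}: cleaning each logical representative off the punctured region by adding coboundaries. If that fails, the fallback is to absorb the residual terms into the $v$-quadratic form and check that they only change $\kappa_b$ by terms independent of $c$.
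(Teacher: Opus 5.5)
Your treatment of the $v$-sum is the paper's computation. The paper also uses $S[A_0+d_0v]-S[A_0]=-(\Delta^{(2)}S)(A_0,d_0v)$, squares the sum, and shifts $v\mapsto v+w$ so that $\sum_w(-1)^{\langle w,M^bv\rangle}$ projects onto $\ker M^b$, where it reads off $\kappa_b$. One minor point: with $T$ or $CS$ gates this phase is a fourth root of unity, so it is a $\bZ_4$-valued refinement of $\tfrac12 M^b$ rather than $(-1)^{q_b(v)}$. The squaring argument still gives your modulus $2^{(|D_X|+\dim\ker M^b)/2}$. You depart from the paper in the logical sector, and there two steps fail as written.

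\textbf{Step 3.} You keep the full sum over $\lambda\in L$ and observe that its modulus depends on $c$ through $\langle c,\cdot\rangle|_L$. That dependence breaks the argument. Adding $z\in\ker(d_0^T)\setminus\im(d_1^T)$ to $c$ preserves $K^bd_0^Tc=\kappa_b$, but it can change $\bigl|\sum_\lambda e^{2\pi iS[\lambda]}(-1)^{\langle c,\lambda\rangle}\bigr|$. Toy example: take one open worldline with $T_Z=\{1,2\}$, $d_0=0$, $d_1=(1\;1)$, $S[A]=\overline{A}_1/8$, $b=1$, and filling $A_0=(0,1)$. Then $N=\emptyset$, so condition C holds under any reading. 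There are no detectors, yet your factor gives $\bigl|1+e^{i\pi/4}(-1)^{c_1+c_2}\bigr|^2=2\pm\sqrt2$.

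The lemma only holds for boundary-resolved amplitudes, which is what the paper uses. Insert $\delta_{A|_{\mathrm{in}},a_{\mathrm{in}}}\delta_{A|_{\mathrm{out}},a_{\mathrm{out}}}$ \emph{before} factorizing. These deltas select a single relative class, so your $\lambda$-factor collapses to one unimodular term and $|\langle a_{\mathrm{in}}|C_{b,c}|a_{\mathrm{out}}\rangle|^2$ is just the squared Gauss sum. Summing over boundary configurations then gives a constant on the solution set. Your closing ``should make the resulting probability constant'' has to be replaced by this argument.

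\textbf{Step 1.} Your reason for cleaning $L$ off $N=\bigcup\{\ell:\ell\cap\sup A_0\neq\emptyset\}$ runs the wrong way. Via the nondegenerate pairing of $\ker d_1/\im d_0$ with $\ker d_0^T/\im d_1^T$, the statement ``every element of $\ker d_1$ supported in $N$ lies in $\im d_0$'' is equivalent to cleaning elements of $\ker d_0^T$ off $N$, not elements of $\ker d_1$. In the same toy complex with the other filling $A_0=(1,0)$, $N=\{1\}$ contains no nontrivial element of $\ker d_1$, yet $(1,1)$ cannot be moved off $N$.

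What you need is $\ker(d_1)|_N=\im(d_0)|_N$. Equivalently, $\ker(d_0^T)\cap\bZ_2^N=\im(d_1^T)\cap\bZ_2^N$, i.e. the complex punctured onto $N$ has trivial $H_1$. This is how the Remark after Def.~\ref{def:condition-C} and Alg.~\ref{alg:conditionC} phrase condition C, and it matches the paper's convention that 1-cycles lie in $\ker d_0^T$. With that reading your Step 1 is correct, including $S[A_0+d_0v+\lambda]=S[A_0+d_0v]+S[\lambda]$.

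Once repaired, your route proves something the paper only asserts in its last sentence. The identity above shows that the $v$-phase, and hence $\kappa_b$, is the same for every relative class $A_0+\lambda+\im d_0$ selected by the boundary. This is needed: without it, the boundary-summed distribution could be a mixture of deltas on different affine subspaces rather than a single flat one.
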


\begin{proof}
    To show that, for fixed $b$, the probability distribution for twisted errors is flat within a subspace spanned by $\Tilde{M}^b$, with twisted detector values of $\kappa_b$, we express the linear operator from input- to output space, for fixed $b,c$ analogously to Eq.\,\ref{eq:third-order-circuit-bulk},
\begin{equation}
\begin{split}
    \langle a_{\mathrm{in}}|C_{b,c}|a_{\mathrm{out}}\rangle =& \sum_{A\colon d_1A=b} \delta_{A|_{\mathrm{in}}, a_{\mathrm{in}}} \delta_{A|_{\mathrm{out}}, a_{\mathrm{out}}}\\
    &\times e^{2\pi i S[A]}(-1)^{\langle c, A\rangle},
\end{split}
\end{equation}
where $a_{\mathrm{in}}$ and $a_{\mathrm{out}}$ denote computational basis states on the input and output space respectively.
This expression holds up to a global normalization constant that does not depend on $b,c$, or $a_{\mathrm{in}}$, $a_{\mathrm{out}}$.
We rewrite the sum over relative cohomology classes $[A_0] \in (\ker(d_1)+A')/\im(d_0)$, where $d_1A'=b$,
\begin{widetext}
\begin{align}
    \langle a_{\mathrm{in}}|C_{b,c}|a_{\mathrm{out}}\rangle = \sum_{[A_0]} e^{2\pi i S[A_0]}(-1)^{\langle c, A_0\rangle}\sum_{v} \delta_{A+A_0|_{\mathrm{in}}, a_{\mathrm{in}}} \delta_{A+A_0|_{\mathrm{out}}, a_{\mathrm{out}}} e^{2\pi i (S[A_0+d_0v] - S[A_0])}(-1)^{\langle d_0^Tc, v\rangle}.
\end{align}
\end{widetext}
The $\delta$s pick out a single cohomology class.
The probability of a given $b,c$ in the interior of the circuit is obtained from the absolute value squared of the expression in which the first factor cancels.
Note that $e^{2\pi i (S[A_0+d_0v] - S[A_0])} = e^{-2\pi i (\Delta^{(2)}S)(A_0, d_0v)}$, which shows that the exponent is a second-order function. In fact, since $S[0]=S[dv]=0$, it is a quadratic function in $v$, for fixed, contractible $A_0$.
After tracing out the inside and outside basis states we are left with an expression for the probabilities
\begin{equation}
\begin{split}
    \Prob_b(c) \propto& \sum_{v,w} e^{-2\pi i (\Delta^{(2)}S)(A_0, d_0v)- (\Delta^{(2)}S)(A_0,w))}\\
    &\times (-1)^{\langle d_0^Tc, v+w\rangle} \\
    =& \sum_v e^{-2\pi i (\Delta^{(2)}S)(A_0, d_0v) } (-1)^{\langle d_0^Tc, v\rangle }\\
    &\times\sum_w (-1)^{\langle w, M^bv\rangle},
\end{split}
\end{equation}
where we changed one summation index $v\mapsto v+w$ and plugged in the definition of $M^b$, Def.~\ref{def:Mmatrices}.
The inner sum evaluates to
\begin{equation}
\begin{split}
    \sum_w (-1)^{\langle w, M^bv\rangle} =& \begin{cases}
        2^{\abs{D_X}} & v\in \ker(M^b)\\
        0 & \mathrm{else}.
    \end{cases}\\
    =& 2^{\abs{D_X}} \delta_{v\in \ker(M^b)}
\end{split}
\end{equation}
and plugging in $\kappa_b(v)/2 = (\Delta^{(2)}S)(A_0, d_0 v)$ for $v\in \ker(M^b)$, we obtain probabilities
\begin{equation}
\begin{split}
    \Prob_b(c) \propto& 2^{\abs{D_X}} \sum_v  \delta_{v\in \ker(M^b)} (-1)^{\langle d_0^Tc, v\rangle +\kappa_b(v)}\\
    =& 2^{\abs{D_X}} 2^{\abs{D_X^b}}\delta_{\langle d_0^Tc, i\rangle = (\kappa_b)_i\;\forall i\in D_X^b},
\end{split}
\end{equation}
which are the same for every syndrome fulfilling the twisted constraints $K^bd_0^T c = \kappa_b$.
And hence, the probability distribution can be modelled by a flat distribution over twisted errors, $\im(\widetilde{M}^b)$.
This calculation also proves the expression for $\kappa_b$.
Note that condition C guarantees that $(\Delta^{(2)}S)(A_0,dv)$ is independent of the chosen representative $A_0$.
\end{proof}

We can now provide the complete proof of our main theorem.

\begin{proof}(of Thm.~\ref{thm:twistedconstraints})
We prove each bullet point separately.

We can compute explicitly,
\begin{align*}
    M^b_{v,w} =& 2(\Delta^{(3)}S)(A, dv, dw)\\
    =& 2\left((\Delta^{(2)}S)(A, dv) - (\Delta^{(2)}S)(A + dw, dv)\right),
\end{align*}
using that $S$ vanishes on coboundaries, Cor.~\ref{cor:SzeroCoboundaries}.
The twisted detectors are exactly the $v$ for which $dv\in \Sym_S^b$.
These are defined by $(\Delta^{(2)}S)(A, dv)$, and $(\Delta^{(2)}S)(A, dw)$, respectively that only depend on $A$ via $b$.
Shifting $A\mapsto A+dw$ leaves $b$ invariant, so that we obtain
\begin{align}
    M^b_{v,w} = 0
\end{align}
for $dv,dw\in \Sym_S^b$.

$K^b$ is the map that restricts a syndrome on $\bZ_2^{D_X}$ to the syndrome on its subspace spanned by the twisted detectors. $\partial_0^b = K^bd_0^T$ follows directly.

Twisted errors are exactly $Z$ errors $e\in \bZ_2^{T_Z}$ for which $d_0^Te \neq 0$ but $K^bd_0^Te=0$, i.e. they are determined by their syndrome, which is $\ker(K^b)$.
$(K^b)^T$ is a generating matrix of $\ker(M^b)$.
A direct calculation shows that
\begin{align*}
    \ker(K^b) =& \im((K^b)^T)^{\perp}
    = \ker(M^b)^{\perp}\\
    =& \im((M^b)^T) = \im(M^b),
\end{align*}    
using that $(M^b)^T=M^b$ (Lem.~\ref{cor:prop-Mb}).
Since $A_0$ is a filling that fulfills condition C, $\widetilde{M}^b$ is supported in a subspace of $\bZ_2^{T_Z}$ that does not contain any non-trivial 1-cycle.
To show that $d_0^T\widetilde{M}^b = M^b$, we can consider the matrix elements of the left hand side explicitly,
\begin{align}
    \langle w, d_0^T\widetilde{M}^b v\rangle = \langle d_0 w,\widetilde{M}^b v\rangle = \langle w, M^b v\rangle,
\end{align}
by definition of the $M$-matrices, Def.~\ref{def:Mmatrices}.

The value for $\kappa_b$ and its independence of the exact form of $A_0$ follow from the proof of Lem.~\ref{lem:probabilities}.
\end{proof}

\subsection{Logical classes and output states}\label{sec:output-states}

The logical action of third-order circuits is described by non-trivial homology and cohomology classes of the complex from Eq.~\eqref{eq:faultcomplex-full} which only arise for circuits with open input and/or output edges.
If the open edges are connected to $X$ tensors, we use the identity rule, Eq.~\ref{eq:identity-rules}, to add a two-legged $Z$ tensor there. The corresponding value of $A$ corresponds to the computational basis state of that input/output qubit.

In this formulation, the 1-cycles correspond to undetectable combinations of $X$ measurement outcomes and $Z$ errors and 2-cocycles undetectable combinations of $Z$ measurements.
The (co)homology classes of these errors correspond to inequivalent logical errors and provide a precise algebraic description of the $Z$ and $X$ fault distance as 1-systoles and 2-cosystoles, i.e. the minimum-weight representatives in a non-trivial class.
For third-order circuits that implement a logical measurement one must only consider the locally generated part of the boundary maps. Then, the logical outcome corresponds to a non-locally generated class that has a corresponding logical error cochain in the cochain complex.
The same applies to non-locally generated elements in the coboundary maps and corresponding non-local chains in its dual chain complex.

We now use this language to make more operational statements about third-order circuits with errors.
The matrix elements of the linear map implemented by $C_{b,c}$ from an input to an output space can be expressed as
\begin{equation}\label{eq:third-order-circuit-matrixelements}
\begin{split}
    \langle a_{\mathrm{in}}|C_{b,c}|a_{\mathrm{out}}\rangle = \sum_{A\colon d_1A=b} & \delta_{A|_{\mathrm{in}}, a_{\mathrm{in}}} \delta_{A|_{\mathrm{out}}, a_{\mathrm{out}}} \\
    &\times e^{2\pi i S[A]}(-1)^{\langle c, A\rangle},
\end{split}
\end{equation}
where $a_{\mathrm{in}}$ and $a_{\mathrm{out}}$ denote computational basis states on the input and output space respectively.
This is analogous to Eq.~\eqref{eq:third-order-circuit-bulk}. 
Note that this expression is exact for any $b$.
If condition C is fulfilled it is helpful to think of the \emph{twisted} chain complex,
\begin{align}
    \bZ_2^{D_X^b} \stackrel{d_0^b}{\longrightarrow} \bZ_2^{T_Z} \stackrel{d_1^b}{\longrightarrow} \bZ_2^{T_X}\oplus \mathcal{E}_Z^b \stackrel{(d_2\, 0)}{\longrightarrow} \bZ_2^{D_Z},
\end{align}
where $d_0^b=d_0(K^b)^T$ and $d_1^b = d_1\oplus (\widetilde{M}^b)^T$, under slight abuse of notation.
$\widetilde{M}^b$ here denotes the generator matrix of twisted $Z$ errors, after appropriately adapting its domain from $\bZ_2^{D_X}$ to one choice of generators for the complement of its kernel.
Note that for each detector that is removed by $K^b$, one twisted error, i.e. a boundary, is added.
This keeps the rank of all homology groups the same.
Operationally, this guarantees that the linear operator $C_{b,c}$ teleports the same number of logical qubits from input to output as $C_{0,0}$.

\myparagraph{Reasons and implications of condition C}
The matrix $M^b$ is defined for any $b$ that is ``contractible'', in the sense that it is not close to any non-trivial 1-cycle, when considering $d_1$ and the connectivity of gates defining $S$ to define ``closeness''.
$M^b$ defines the abstract space of twisted detectors via its kernel.
Condition C is an additional condition on the existence of a trivial filling and is needed to calculate $\kappa_b$, the true values of the twisted detectors.
Moreover, it guarantees that any twisted error, as characterized by $\widetilde{M}^b$ in Thm.~\ref{thm:twistedconstraints}, leads to the same decoding outcome when mapping back the decoding output of the twisted $Z$ decoding problem to the untwisted one.

Condition C guarantees that the logical action of $C_{b,c}$ is related to that of $C_{0,0}$.
We understand it to be a contractibility condition on the filling itself.
Having a non-contractible filling means that the logical unitary is ill-defined over the ensemble of $c$ defined via all $Z$-like errors with the correct syndrome.
Moreover, the non-trivial homotopy class of the filling itself might result in a diagonal logical error which is independent of the homology class of $c$.
For any fixed filling the associated logical diagonal operator can be efficiently computed, by evaluating the path integral, i.e. calculating $S[A]$, but the circuit will sum over all relative cohomology classes of fillings.
This makes it particularly hard to assess the precise logical action in this setting, without further restricting the input state, to e.g. a logical computational basis state.

One can view the twisted errors as a mechanism to identify if a given correction in the twisted decoding problem introduced a logical error or not, by picking a reference class within the untwisted $Z$ problem.
This identification of homology classes within the two decoding problems is only canonical for a contractible filling, giving another operational meaning to the failure point of our framework if no contractible filling is found.

\myparagraph{Output states}
Consider a third-order circuit without any input edges.
The coefficients of its output state can be written as
\begin{align}\label{eq:third-order-state}
    \langle a\ket{\psi_{b,c}} = \sum_{A\colon d_1A=b}\delta_{A|_{\mathrm{out}}, a} e^{2\pi i S[A]}(-1)^{\langle c, A\rangle}.
\end{align}
The linear constraint in the sum induces an affine linear constraint on the computational basis states on which $\ket{\psi_{b,c}}$ is supported.
This is captured by a set of Pauli-$Z$ stabilizers that leave the state invariant, up to phase.
The associated parity-check matrix $d_1|_{\mathrm{out}}$ is obtained from puncturing the classical code defined by $d_1$ onto $T_{Z,\mathrm{out}}$, the space spanned by the $Z$ tensors on the output edges.
This means that $d_1|_{\mathrm{out}}$ is a generator matrix for the linear classical code formed by all elements in $\im(d_0)$ to $T_{Z,\mathrm{out}}$.
We can also extract the eigenvalue of each generator by truncating $b$ accordingly.

The output state $\ket{\psi_{b,c}}$ is also stabilized by Clifford operators.
These are of the form $DP_X$, where $D$ is a diagonal Clifford operator and $P_X$ a Pauli-$X$ operator, which we call \emph{twisted $X$ stabilizers}.
For $b=0$ the group is isomorphic to the image of $d_0$, truncated onto $T_{Z,\mathrm{out}}$. Denote this truncated operator by $d_0|_{\mathrm{out}}$.
For each element in that group, there is an associated $v\in\bZ_2^{D_X}$ for which $d_0|_{\mathrm{out}}v$ indexes the support of the $P_X$ part.
The diagonal operator in each stabilizer can be computed via Eq.~\eqref{eq:third-order-state}.
The matrix elements of the operator $D_v$, attached to a Pauli-$X$ with support $d_0|_{\mathrm{out}}v$ are given by
\begin{equation}
\begin{split}
    \bra{a} D_v\ket{a} =& \langle a\ket{\psi_{b,c}}\bra{a}\prod_{i\in d_0|_{\mathrm{out}}v}X_i\ket{\psi_{0,c}}^\ast\\
    =& (-1)^{\langle d_0^T c,v\rangle }e^{2\pi i(\Delta^{(2)}S)(A, d_0 v)},
\end{split}
\end{equation}    
where $A|_{\mathrm{out}}=a$ and $\bullet^\ast$ denotes complex conjugation.

Note that this defines a Clifford operator since $(\Delta^{(2)}S)$ is a second-order function, see App.~\ref{app:ithorder}.
This expression is independent of the explicitly chosen $A$ since $A$ is fixed on the boundary and $S$ is invariant under $A\mapsto A+ d_0v$, where $d_0v$ has no support on $T_{Z,\mathrm{out}}$.
A similar treatment of the output state of a post-selected third-order circuit can be found in Ref.~\cite[Sec. 3.1]{bauer2026gates}.

For non-trivial $b$, the stabilizers defined above actually do not commute since they are defined by wave-function coefficients on cocycles.
They commute up to Pauli $Z$ operators, defined by an affine linear function on the space of twisted $X$ stabilizers, given by
\begin{align}\label{eq:boundary-commutation}
\phi_v(a) - \phi_w(a) + \phi_w(a+ d_0|_{\mathrm{out}}v) -\phi_v(a+ d_0|_{\mathrm{out}}w),
\end{align}
where $\phi_v(a) = \langle d_0^T c,v\rangle/2 + 2\pi i((\Delta^{(2)}S)(A, d_0 v) - S(d_0 v))$ is the function characterizing the matrix elements of $D_v$.
This expression can be simplified by identifying $(\Delta^{(3)}S)$, showing some similarity to the bulk $M^b$ matrix.
Similar to the bulk twisted detectors, $\ket{\psi_{b,c}}$ is in a deterministic eigenstate of stabilizers that correspond to the kernel of the linear part of Eq.~\eqref{eq:boundary-commutation}.

\section{Simulating the decoding problem}\label{sec:decoding-sim}

We present an algorithm that faithfully generates a sample of the measurement statistics of a third-order circuit with a sampled Pauli error when condition C, defined in Def.~\ref{def:condition-C}, is fulfilled.
Without condition C we cannot efficiently characterize the syndrome distribution and also not guarantee the correct logical action of the non-Clifford block, so we declare decoding failure.
This leads to an overestimate in logical failure rates that are expected to be small for fault tolerant logical blocks.

The simulator builds on the technical results in Sec.~\ref{sec:FT-thirdorder}.
The $M$-matrices defined in Def.~\ref{def:Mmatrices} play a central role.

\subsection{Overview and subroutines}
We give an overview of the algorithm in Alg.~\ref{alg:overview}.
Since the non-Clifford gates do not affect the $Z$ errors directly, we assume that the $X$ and $Z$ errors in that circuit are decoded separately.
The $X$ errors might require a just-in-time decoder.
We denote the $X$ decoder with $\DecX\colon \bZ_2^{D_Z}\to \bZ_2^{T_X}$, the $Z$ decoder with $\DecZ\colon \bZ_2^{D_X}\to \bZ_2^{T_Z}$.
Since the $Z$ decoder can be run after the entire circuit was executed, it might get access to the $Z$ measurements $b_m$ and the $X$ correction $b_c$ to mitigate some of the twisted errors~\cite{highthresholdJIT}.
Moreover, we use subroutines, $\CheckCondC, \GenTwisted$ and $\GenLinking$, that we define in Algs.~\ref{alg:conditionC}, \ref{alg:twistederror} and \ref{alg:linkingerror}.

\begin{algorithm}
\caption{Estimate failure rates of third-order circuits with Pauli errors}
\label{alg:overview}
\DontPrintSemicolon
\LinesNumbered

\KwIn{Circuit description with complex Eq~\eqref{eq:faultcomplex-full}, Pauli noise model $\mathcal{N}$}
\KwOut{Sample of logical success or failure of third-order circuit with sampled errors}
Sample Pauli error $(b_e,c_e)\in \bZ_2^{T_X}\oplus \bZ_2^{T_Z}$ from $\mathcal{N}$\;
Run $\DecX(d_2b_e)\to b_c$\;
$b_e + b_c\to b$\;
\eIf{$b\in \im(d_1)$}{
$\CheckCondC(b)\to (\mathrm{check}, A)$\;
\lIf{$\mathrm{check}$}{
\Return{$\mathrm{failure}$}}}{
\Return{$\mathrm{failure}$}
}
$\GenTwisted(b, A) \to c_{\mathrm{tw}}$\;
$\GenLinking(b, A) + c_e +c_{\mathrm{tw}} \to c_e$\;
$\DecZ(d_0^Tc_e) \to \Tilde{c}_e$\;
\eIf{$c_e + c_m + c_{\mathrm{tw}}+ \tilde{c}_e\in \im(d_1^T)$}{\Return{$\mathrm{success}$}}{\Return{$\mathrm{failure}$}}
\end{algorithm}

In the following, we describe important, simple subroutines in the simulation algorithm, and explain how to deal with open boundaries, that interface with Clifford logical blocks.

\myparagraph{Sampling measurement outcomes}
Individual outcomes of third-order circuits might be random, e.g. in dimensional-jump protocols~\cite{bombin2016jump, bombin2018JIT, Brown2020JIT, Scruby2022JIT} and dynamical Floquet-style versions of non-Abelian protocols~\cite{bauer2025planar}.
In these cases, however, the decoding problem can be reliably simulated by setting all of the intrinsically random outcomes to $=1$ at the start of Alg.~\ref{alg:overview}.
The detector values that are used in decoding can be obtained from the errors directly.

A subtlety arises in non-Clifford blocks that implement a non-unitary logic gate and whose measurement outcomes themselves depend on the logical states.
Generically, this part of the syndrome distribution cannot be sampled from efficiently.
Concretely, only marginals of the measurement distributions will be subject to affine constraints.
In practice, these logical blocks are used for specific types of logical operations where we have prior knowledge about the distribution of logical outcomes.
For example, in many state-preparation protocols~\cite{tiedinknots, williamson2026fastmagicstatepreparation} or measurements used for teleportation or injection~\cite{tiedinknots, huang2026hybridlattice, manjunath2026groupsurface}, the logical outcomes are known to be i.i.d distributed.
This knowledge can be used to sample the entire distribution of physical measurement outcomes.
Additionally, the logical outcome will enter the decoding problem for $c$.
Formally, the chain complex in the last step in Alg.~\ref{alg:overview} is engineered to have a (high-weight) non-trivial cocycle for each logical outcome, each of which has its corresponding 1-cycle which corresponds to a logical fault.
We expand on how a simulator of entire logical circuits deals with fault tolerant measurement gadgets in Ref.~\cite{companion}.

\begin{remark}
    While the decoding problem can be modelled by setting all random measurement outcomes to $+1$, in a physical implementation the outcomes will be random and the physical modifications to the circuit, based on the decoder output, will depend on the outcomes explicitly.
    Concretely, Pauli-$X$ corrections need to be applied along a \emph{gauge fixing membrane} that fills $b$. Finding one filling will always succeed and the success of the protocol will not depend on which filling is chosen.
    This is analogous to corrections to non-trivial stabilizer measurements in dimensional-jump protocols~\cite{bombin2016jump}.
    Ref.~\cite{bauer2025planar} provides an explicit explanation of this phenomenon in third-order circuits that implement a non-Clifford logical block on the 2D color code.
\end{remark}

\myparagraph{Puncturing and shortening}
As a subroutine to $\CheckCondC$ we need to \emph{puncture} a length-2 chain complex
\begin{align}
    \bZ_2^{D_X} \to \bZ_2^{T_Z} \to \bZ_2^{T_X}
\end{align}
onto a subset $\Imp\subseteq T_Z$, in a specific basis.
For a CSS quantum code described by this chain complex this would mean to measure out the qubits in the complement of $\Imp$ in a specific complete CSS basis.

For $\CheckCondC$ we puncture in $X$.
As the first coboundary map, we consider the restriction of $d_0$ onto $\Imp$.
When interpreting $d_1$ as the generator matrix of a classical code, $d_1^{\Imp}$ is the matrix that generates the subspace of codewords that are truncated onto $\Imp_b$, i.e. the subspace obtained as the span when restricting \emph{all} codewords onto $\Imp$.
This procedure is also called \emph{puncturing}.
To obtain the second coboundary map, we consider the code with generator matrix $d_1^T$.
This code needs to be \emph{shortened} onto $\Imp$, by identifying a generating matrix $\partial_1^{\Imp}$ of $\im(d_1^T)\cap \bZ_2^{\Imp}$.
This involves a Gaussian elimination step which can be made relatively efficient and parallelized for sparse matrices and sparse $\Imp$.
Together, these two procedures define the punctured complex,
\begin{align}
    \bZ_2^{D_X^{\Imp}} \stackrel{d_0^{\Imp}}{\longrightarrow} \bZ_2^{\Imp_b} \stackrel{d_1^{\Imp}}{\longrightarrow} \bZ_2^{T_X^{\Imp}},
\end{align}
where $d_1^{\Imp} = (\partial_1^{\Imp})^T$.
In Alg.~\ref{alg:conditionC} both puncturing and shortening are separated into two subroutines.

\myparagraph{Checking condition C}
In the general case we propose to approximately check condition C by finding a minimum-weight solution to $d_1A = b$.
This is equivalent to the decoding problem of an LDPC code which can be efficiently approximated.
It can also be checked separately on every connected component of $b$ individually, making it less likely to fail for a high-weight but clustered configuration $b$.
Moreover, in large circuits, errors will be sufficiently sparse that it can be further parallelized.
For specific instances, we expect that partial analytic understanding of condition C can help to further reduce the error in approximating checking condition C.

\begin{algorithm}
\caption{$\CheckCondC$: Check condition C (approximately)}
\label{alg:conditionC}
\DontPrintSemicolon
\LinesNumbered
\KwIn{2-cocycle $b\in{\im(d_2)}$, efficient approximator from min-weight solution of $d_1A =0$ $\mathrm{ApproxDec}$}
\KwOut{Found no contractible filling $\in \{\mathrm{true}, \mathrm{false}\}$, a filling $A$}
$\mathrm{ApproxDec}(b)\to A$\;
Puncture $d_0^T$ onto $\Imp_A$, obtain $\partial^{\Imp}\colon \bZ_2^{\Imp_A} \to \bZ_2^{D_X^{\Imp}}$\;
Shorten $d_1$ onto $\Imp_A$, obtain $d^{\Imp}\colon \bZ_2^{\Imp_A} \to \bZ_2^{D_Z^{\Imp}}$\;
$\bigg[\bZ_2^{D_X^{\Imp}} \stackrel{(\partial^{\Imp})^T}{\longrightarrow}\bZ_2^{\Imp_A} \stackrel{d^{\Imp}}{\longrightarrow} \bZ_2^{D_Z^{\Imp}} \bigg]\to C$\;
\eIf{$\dim(H_1(C))=0$}{
\Return{$(\mathrm{true}, A)$}}{
\Return{$(\mathrm{false}, 0)$}}
\end{algorithm}

\begin{algorithm}
\caption{$\GenTwisted$: Sample twisted error}
\label{alg:twistederror}
\DontPrintSemicolon
\LinesNumbered
\KwIn{contractible 2-cocycle $b$, contractible filling $A\colon d_1A=b$}
\KwOut{random $c_{\mathrm{tw}}\in \mathcal{E}_Z^b$}
Calculate $\widetilde{M}^b(A,\bullet, \bullet)$ (Def.~\ref{def:Mmatrices})\;
$0\to c_{\mathrm{tw}}\in \bZ_2^{T_Z}$\;
\ForEach{$v\in D_X$}{
Sample random bit $x\in\{0,1\}$\;
\If{$x=1$}{$c_{\mathrm{tw}}\to c_{\mathrm{tw}} + \widetilde{M}^b v$}
}
\Return{$c_{\mathrm{tw}}$}
\end{algorithm}

\begin{algorithm}
\caption{$\GenLinking$: Calculate offset-error}
\label{alg:linkingerror}
\DontPrintSemicolon
\LinesNumbered
\KwIn{contractible 2-cocycle $b$, contractible filling $A$}
\KwOut{offset error $c_0\colon \partial_0^b c_0 = \kappa_b$}
Calculate $M^b(A, \bullet, \bullet)$\;
Calculate $K^b$, injective generator matrix of $\ker(M^b)$\;
$2(\Delta^{(2)}S)(A, (K^b)^T \bullet) \to \kappa_b$\;
Find one solution $c_0$ of $K^b d_0^T c_0 = \kappa_b$\;
\Return{$c_0$}
\end{algorithm}

\begin{remark}
    In the description of the algorithms above we have left many subroutines implicit.
    These steps all involve evaluating equations over $\bZ_2$ or solving linear equations over $\bZ_2$ which can be done efficiently.
    In practice, these subroutines can also be highly optimized and parallelized, given the locality structure of the circuits.
    We expect that our methods will work particularly well for large and structured circuits, for example circuits that admit a form of translation invariance, or more general group-translation symmetry.
    We leave explicit implementations and optimizations to future work.
\end{remark}

\begin{remark}
    Ref.~\cite{highthresholdJIT} demonstrates a numerical implementation of our formalism for the specific case of a topological protocol where the non-Clifford gates are defined via a triple-intersection cohomology invariant.
    It shows that the overestimate in logical failure rate can be bounded in structured circuits where condition C can be efficiently checked.
\end{remark}

\subsection{Interfacing with Clifford blocks}\label{sec:interface}

In this paragraph we discuss how to define the boundary conditions on the decoding graphs that enter the simulation algorithms for noisy input states and how to produce the necessary data to interface with Clifford simulators.

We consider an input state, encoded in a CSS code, subject to a correctable set of Pauli errors.
These are captured by an \emph{input Pauli frame}.
In the following we describe how to define the decoding graph with open output boundaries and how the errors and corrections can be mapped to an \emph{output Pauli frame} on a CSS stabilizer group with respect to which the next Clifford block is defined.

To simplify the boundary treatment, we assume that the input and output stabilizer codes on which the third-order circuit implements a logic gate are CSS and we explicitly incorporate the last round of syndrome extraction of the input and the first round of syndrome extraction for the output code into the non-Clifford block.
If the Clifford circuits before and/or after the non-Clifford block are dynamical protocols that and not related to a single CSS stabilizer code, we include the part of the circuit that reads out, respectively initializes, a CSS stabilizer group with extensive distance, to have a complete set of half-open CSS detectors on both boundaries.

The task of the simulator for the block is to take an input Pauli frame, perform the simulation of the resulting decoding problem for the non-Clifford block and, conditioned on decoding success, outputs an output Pauli frame.
In our setting decoding success implies that $b$ fulfills condition C.
The output Pauli frame depends explicitly on the output of $X$ decoding within the non-Clifford block, which defines the 2-cocycle $b$.
Note that the $Z$ decoding is not strictly needed to update the Pauli frame as no active $Z$ corrections are needed within the non-Clifford block to be fault tolerant.

The input frame is described by a binary string $(e_x^{\mathrm{out}},e_z^{\mathrm{in}})\in\bZ_2^{T_{Z,\mathrm{in}}}\oplus\bZ_2^{T_{Z,\mathrm{in}}}$, labelling the input Pauli errors in their symplectic representation.
We use simple tensor-network rewrite rules, Eqs.~\eqref{eq:Ztensor-symmetry}, \eqref{eq:Xtensor-symmetry} and \eqref{eq:fusionrules-tensors} to find the equivalent vectors in $\bZ_2^{T_Z}$ and $\bZ_2^{T_X}$.
The $Z$ part can be directly fused into the boundary $Z$ tensor via the natural embedding $\bZ_2^{T_{Z,\mathrm{in}}}\hookrightarrow\bZ_2^{T_Z}$.
Since there is no third-order tensor directly connected to the input $Z$ tensors, we can push the $X$ tensor through, using Eq.~\eqref{eq:Ztensor-symmetry}, where it splits and maps onto $d_1e_x\in\bZ_2^{T_X}$.
This defines the embedding of $e_x$ into $\bZ_2^{T_X}$.
In this way we have defined the \emph{closed boundary condition} to the decoding problem for the third-order circuit.
At this boundary, the detector values are fixed from the values from the previous (Clifford) block.
Since we included a complete set of open CSS detectors via padding with a shallow CSS circuit there is a canonical identification of the detectors from the previous round to fixed affine constraints on the input boundary of the non-Clifford block.

The output boundaries to the decoding graphs of the non-Clifford block are left \emph{open}, i.e. $b,c$ are allowed to terminate into the future-directed boundary.
This can be achieved for example, by treating the half-open boundary detectors as actual detectors and setting the weight of all (hyper)edges that include them to 0.
The termination pattern of $b,c$ on these detectors will define the syndrome of the output state with respect to the output stabilizer group.
To find the correct Pauli frame the $b$ and $c$ configurations in the bulk of the non-Clifford block are propagated to the output boundary.
This can be done directly in the underlying CSS circuit, using tensor-network rewrites or, equivalently, standard Pauli frame tracking tools based on the stabilizer formalism.

\section{Outlook}\label{sec:outlook}
We have proven that the measurement outcomes within fault tolerant third-order circuits can be efficiently sampled from if the $X$ errors have trivial logical effect.
This guarantees that the measurement distribution itself does not depend on the logical information flowing through the circuit.
We show that the resulting distribution of measurement outcomes is constrained by linear constraints that can be interpreted as \emph{detectors in a non-Clifford circuit}.
We use these constraints to construct an efficient sampling procedure of the measurement outcomes in the circuit and to derive a well-defined decoding problem.
Moreover, we have derived sufficient conditions to determine if a given error-configuration induces a non-trivial logical error or not.
Taken together, this provides a framework to efficiently estimate the failure rates of non-Clifford logical blocks where the error in estimation is small for fault tolerant circuits.

Identifying the precise relationship of the linear constraints in our spacetime perspective and the Clifford stabilizers in a stroboscopic view of the circuit from Ref.~\cite{yugopaper} could allow for efficient simulation and decoding of protocols with low fault distance.

Our framework has been numerically implemented for a topological circuit in Ref.~\cite{highthresholdJIT}.
It would be important to obtain more realistic estimates on its failure rate under circuit-level noise, and apply it to other protocols that promise more efficient magic-state preparation schemes~\cite{zhu2026nonabelianqldpc, christos2026gauging, williamson2026fastmagicstatepreparation}.

Going forward, our formalism allows us to interface with simulators of fault-tolerant Clifford circuits that are based on tracking Pauli frames.
This will allow us to efficiently estimate failure rates of entire fault tolerant implementations of algorithms or their subroutines~\cite{companion}.

It would be interesting to investigate how to extend the formalism to adaptivity in the physical circuits.
This could potentially allow us to generalize schemes that reduce the physical overhead for Clifford circuits to the non-Clifford setting~\cite{Berthusen2025adaptive, birchall2026macromux, bartolucci2025comparison}.

Generalizing detectors in non-Clifford circuits beyond third-order circuits would be important to quantify error-correcting properties of larger classes of circuits.
It is possible to represent circuits that also include other types of gates as a path integral.
For example, Hadamard gates can be represented using an auxiliary variable~\cite{bauer2026gates}.
It is also conceivable that parts of the formalism can be extended to diagonal gates in higher levels of the Clifford hierarchy, or mixed-dimensional systems.
Obtaining a quantitative understanding of the resulting symmetries of the path-integral would allow us to derive emerging linear constraints and help to understand fundamental limitations to capture the measurement statistics of a fault tolerant circuit.

In a broader context it would be interesting to obtain a similar homological interpretation of fault-tolerant blocks that operate on non-CSS codes, using 3-term fault complexes, as used in Refs.~\cite{pesah2025spacetimecode, yuan2026noncss}.
This could lead to more efficient schemes to implement fault tolerant logic on general LDPC codes.

\section*{Acknowledgements}
The framework and algorithms presented in this work have been derived without the help of large language models.
GPT-5.6 Sol and GPT-6 Astra were used to streamline the writing and for parts of the calculation of the examples presented in App.~\ref{app:examples}.

We thank Y. Takada, S. Bartlett and D. Williamson for sharing their work on Ref.~\cite{yugopaper} with us.
JM thanks A. Bauer for previous collaborations and fruitful discussions.
Part of this work was done while JM and TS attended the FTQT workshop at the Centro de Ciencias de Benasque Pedro Pascual. 
JM acknowledges funding from the DFG (CRC183) and from Defence Science and Technologies Group (DSTG) and Advanced Strategic Capabilities Accelerator (ASCA) through its Emerging and Disruptive Technologies (EDT) Program.
This work was supported in part by Japan Science and Technology Agency (JST) as part of Adopting Sustainable Partnerships for Innovative Research Ecosystem (ASPIRE), Grant Number JPMJAP25A3.

\onecolumngrid

\begin{appendix}

\section{$i$th order functions and diagonal gates in the Clifford hierarchy}\label{app:ithorder}
Any diagonal gate on $n$ qubits is determined by its matrix elements.
These are captured by a function $S:\bZ_2^{n} \to \bR/\bZ\simeq [0,1)$,
\begin{align}
    S[A] = \sum_{x\in \bZ_2^n} S_x \prod_{i = 1}^{n} A_{x_i},
\end{align}
and the corresponding gate can be written as $D_S = \sum_A e^{2\pi i S[A]} \ketbra{A}$.

Ref.~\cite{Cui2017diagonal} already points out that the functions that define diagonal gates in the Clifford hierarchy~\cite{Gottesman1999} are highly constrained and described by so-called \emph{phase polynomials}.
In the following, we shortly review an equivalent formulation in terms of \emph{$i$th-order functions} between Abelian groups which we use to derive linear constraints in the circuit that allow for an efficient description and efficient sampling method of the correct Pauli-error distribution.

\begin{definition}[see also Def. 32 in Ref.~\cite{bauer2026quadratictensors}]\label{def:ithorder}
Let $G,M$ be two Abelian groups and $f:G\to M$ a function from $G$ to $M$.
We introduce the \emph{$i$th derivative} of $f$ recursively as,
\begin{align}\label{eq:def-derivative}
\begin{split}
    \Delta^{(i)}f: G^{\times i} \to& M,\\
    (\Delta^{(i)}f)(g_1,...,g_i) =& (\Delta^{(i-1)}f)(g_1,g_3,...,g_i) +_M (\Delta^{(i-1)}f)(g_2,g_3,...,g_i)\\
    &-_M (\Delta^{(i-1)}f)(g_1 +_G g_2 ,g_3,...,g_i),
\end{split}
\end{align}
where ``$+_G$'' denotes the group operation in $G$, and similarly for $M$.
The first two derivatives are defined as
\begin{align}
    \Delta^{(0)}f = f(0) \qq{and} (\Delta^{(1)}f)(g) = f(g)-f(0)
\end{align}

We call a function $f$ an \emph{$i$th order function} if its $(i+1)$th derivative vanishes,
\begin{align}\label{eq:def-ithorder}
    \Delta^{(i+1)}f = 0.
\end{align}
We denote the set of all $i$th order functions from $G$ to $M$ with $\mathcal{F}_i(G,M)$. If clear from context, we omit $G$ and $M$.
\end{definition}

Note that Eq.~\eqref{eq:def-derivative} seems to treat the first argument of $\Delta^{(i+1)}f$ in a special way.
It turns out, in fact, that the resulting functions, however, do not depend on that choice, and it is hence sufficient to consider changes of $\Delta^{(i+1)}f$ in the first component when defining the derivative.

\begin{corollary}\label{cor:derivates-symmetric}
    Let $f:G\to M$ be a function from $G$ to $M$.
    Any derivative is symmetric in its arguments, $(\Delta^{(i)}f)(g_1, g_2, ...,g_i) = (\Delta^{(i)}f)(\pi(g_1), \pi(g_2), ...,\pi(g_i))$ for any permutation $\pi$ on $i$ elements.
    It follows that the $i$th derivative of an $i$th order function is a group homomorphism on $G^{\times i}$.
\end{corollary}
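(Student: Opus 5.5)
The plan is to derive a closed, manifestly symmetric formula for $\Delta^{(i)}f$ and read both claims off it. Specifically, I would prove by induction on $i\geq 1$ that
\begin{align*}
    (\Delta^{(i)}f)(g_1,\dots,g_i) = -\sum_{I\subseteq\{1,\dots,i\}} (-1)^{\abs{I}} f\Big(\sum_{k\in I} g_k\Big),
\end{align*}
where the empty sum is $0\in G$.

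\emph{Base case.} For $i=1$ the right-hand side is $-(f(0)-f(g_1)) = f(g_1)-f(0)$. This agrees with the definition of $\Delta^{(1)}f$. As a sanity check, for $i=2$ the recursion gives $f(g_1)+f(g_2)-f(g_1+g_2)-f(0)$, which also matches the formula.

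\emph{Induction step.} Assume the formula for $i-1$ and insert it into the three terms of the recursion Eq.~\eqref{eq:def-derivative}. Each term has the form
\begin{align*}
    -\sum_{J\subseteq\{3,\dots,i\}}(-1)^{\abs{J}}\big[f(g_J)-f(g_J+x)\big],
\end{align*}
with $x=g_1$, $x=g_2$ and $x=g_1+g_2$ respectively, and $g_J=\sum_{k\in J}g_k$. Adding the first two and subtracting the third, the three copies of $f(g_J)$ collapse to a single one. The result is
\begin{align*}
    -\sum_{J}(-1)^{\abs{J}}\big[f(g_J)-f(g_J+g_1)-f(g_J+g_2)+f(g_J+g_1+g_2)\big].
\end{align*}
This is exactly the sum over all $I=J\cup I'$ with $I'\subseteq\{1,2\}$, with sign $(-1)^{\abs{I}}$, so the formula holds for $i$. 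This bookkeeping of signs is the only real work in the proof.

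\emph{Symmetry.} The closed formula is invariant under any permutation of $(g_1,\dots,g_i)$, since permuting the arguments only relabels the subsets $I$. So every derivative is symmetric.

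\emph{Multilinearity.} Now let $f$ be an $i$th order function, so $\Delta^{(i+1)}f=0$. Reading the recursion Eq.~\eqref{eq:def-derivative} at level $i+1$ gives
\begin{align*}
    (\Delta^{(i)}f)(g_1,g_3,\dots)+(\Delta^{(i)}f)(g_2,g_3,\dots)=(\Delta^{(i)}f)(g_1+g_2,g_3,\dots).
\end{align*}
This is additivity of $\Delta^{(i)}f$ in its first argument. By the symmetry just shown, $\Delta^{(i)}f$ is additive in every argument separately, i.e.\ a group homomorphism in each argument (multi-additive) on $G^{\times i}$.
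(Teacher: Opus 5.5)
Your proof is correct, but it gets symmetry by a different route from the paper.

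\textbf{The paper's route.} The paper inducts on the symmetry property itself. $\Delta^{(2)}f$ is visibly symmetric. The induction hypothesis makes $\Delta^{(i+1)}f$ invariant under permutations of the trailing arguments $g_3,\dots$, and the swap $g_1\leftrightarrow g_2$ is immediate. The remaining mixed transpositions are only asserted to follow ``from the definition using commutativity.'' To actually check, say, $g_2\leftrightarrow g_3$, one must use the hypothesis to reorder $(\Delta^{(i)}f)(x,g_3,\dots)$ as $(\Delta^{(i)}f)(g_3,x,\dots)$ and expand the recursion once more. That yields an expression manifestly symmetric in $g_1,g_2,g_3$.

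\textbf{Your route.} You solve the recursion outright and obtain the inclusion--exclusion formula $(\Delta^{(i)}f)(g_1,\dots,g_i)=-\sum_{I}(-1)^{|I|}f(g_I)$. Symmetry then reduces to relabelling subsets, and no step is left to the reader. The cost is the sign bookkeeping in the induction, which you do correctly: the base case matches $\Delta^{(1)}f$, and the three copies of $f(g_J)$ do collapse with net coefficient $+1$.

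\textbf{What each buys.} The paper's argument is shorter and never needs an explicit form of $\Delta^{(i)}f$. Yours yields more:
\begin{itemize}
\item it identifies $\Delta^{(i)}f$ with $(-1)^{i+1}$ times the usual iterated finite difference of $f$ at $0$;
\item it shows $\Delta^{(i)}f$ vanishes whenever any argument is $0$;
\item with $S[0]=0$, it is exactly the expansion of $\Delta^{(3)}S$ that the paper writes out by hand in App.~\ref{app:examples}.
\end{itemize}

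\textbf{Multi-additivity.} This step coincides with the paper's: vanishing of $\Delta^{(i+1)}f$ in the recursion gives additivity in the first slot, and symmetry moves it to every slot. You read ``group homomorphism on $G^{\times i}$'' as additivity in each argument separately. That is the correct reading, since a multi-additive map is generally not a homomorphism of the product group.
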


\begin{proof}
    Can be proven by induction.
    For the base case, consider $\Delta^{(2)}f: G^{\times 2} \to M$. It is defined as
    \begin{align*}
        (\Delta^{(2)}f)(g_1,g_2) = (\Delta^{(1)}f)(g_1) +_M (\Delta^{(1)}f)(g_2) -_M (\Delta^{(1)}f)(g_1+_Gg_2),
    \end{align*}
    which is obviously symmetric.
    Now, assume that $\Delta^{(i)}f$ is symmetric. It follows directly that $\Delta^{(i+1)}f$ is invariant under any permutation of the arguments $g_3, ..., g_i$. Symmetry in the other arguments can be inferred from the definition of $\Delta^{(i)}f$ using commutativity of group multiplication in both $G$ and $M$.

    The second statement in the Corollary follows directly from Eq.~\eqref{eq:def-ithorder} and the symmetry that was just shown for $\Delta^{(i)}f$.
\end{proof}

There is a direct correspondence between diagonal gates in the Clifford hierarchy~\cite{Cui2017diagonal} and $i$th order functions for $G=\bZ_2^n$ and $M=\bR/\bZ\simeq U(1)$.

\begin{definition}[Clifford hierarchy]
    Let $\mathcal{P}_n$ be the Pauli group on $n$ qubits. 
    The Clifford hierarchy
    \begin{align}
        C_1 = \mathcal{P}_n \subseteq C_2 \subseteq C_3 \subseteq ...
    \end{align}
    is defined recursively by
    \begin{align}
        C_i = \{U\in U(2^n)\;|\; UPU^\dagger \in C_{i-1} \;\forall P\in \mathcal{P}_n\}.
    \end{align}
\end{definition}

\begin{proposition}(Diagonal gates in the Clifford hierarchy)
    Let $D_i\subseteq C_i$ be the set of diagonal gates in the $i$th level of the Clifford hierarchy.
    The following two properties hold
    \begin{itemize}
        \item $D\in D_i \Leftrightarrow \bigg\{\comm{\comm{\comm{\comm{D}{X^{(1)}}}{X^{(2)}}}{...}}{X^{(i)}} \propto \one$
        for any collection of Pauli $X$ operators $\{X^{(j)}\}_{j=1}^i \bigg\}$.
        \item $D_i$ is a group
    \end{itemize}
\end{proposition}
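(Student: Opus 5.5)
The plan is to run both bullet points by induction on $i$. The single tool is the group commutator $\comm{D}{X^a} := D X^a D^\dagger X^a$ of a diagonal gate $D$ with a Pauli-$X$ string $X^a$, $a\in\bZ_2^n$. If $D=\sum_A e^{2\pi i S[A]}\ketbra{A}$, then $\comm{D}{X^a}$ is again diagonal, with phase function $A\mapsto S[A]-S[A+a]$. This is a first derivative of $S$ in the sense of Def.~\ref{def:ithorder}, up to a constant shift. Nested commutators with $X^{(1)},\dots,X^{(i)}$ therefore correspond to iterated differences of $S$, which is why the criterion matches the $i$th-order functions used in the main text.

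\emph{Base case $i=1$.} If $D\propto Z^c$, then $\comm{D}{X^a}=\pm\one$. Conversely, suppose $\comm{D}{X^a}\propto\one$ for every $a$. Then $S[A]-S[A+a]=\lambda_a$ is independent of $A$. Applying the shift twice gives $2\lambda_a=0$ in $\bR/\bZ$, so $\lambda_a\in\{0,1/2\}$ and $a\mapsto 2\lambda_a$ is $\bZ_2$-linear. Hence $S$ equals a constant plus $\tfrac12\langle c,A\rangle$ for some $c$, i.e.\ $D\propto Z^c\in D_1$.

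\emph{Inductive step, first bullet ($i\ge2$).} I will first record two facts.
\begin{itemize}
  \item For any $k$, $C_k$ is stable under multiplication by Paulis and by global phases. Indeed, $(UP)Q(UP)^\dagger=\pm UQU^\dagger$, and each $C_{k-1}$ is closed under signs.
  \item For a diagonal $D$ and a Pauli $P=X^aZ^c$ (up to phase), $DPD^\dagger = D X^a D^\dagger Z^c = \comm{D}{X^a}\,X^a Z^c$, since $Z^c$ commutes with $D$.
\end{itemize}
Combining them, $D\in C_i$ holds iff $\comm{D}{X^a}\in C_{i-1}$ for all $a$. Since $\comm{D}{X^a}$ is diagonal, this is the same as $\comm{D}{X^a}\in D_{i-1}$. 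The induction hypothesis then translates $\comm{D}{X^a}\in D_{i-1}$ into the statement that all $(i-1)$-fold further nested commutators are $\propto\one$, which is exactly the claimed criterion for $D$.

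\emph{Second bullet (group property).} For diagonal $D,D'$ I will insert $X^aX^a$ and use that diagonal matrices commute:
\begin{align*}
\comm{DD'}{X^a} &= D\,\comm{D'}{X^a}\,X^aD^\dagger X^a = \comm{D'}{X^a}\comm{D}{X^a},\\
\comm{D^\dagger}{X^a} &= \comm{D}{X^a}^\dagger.
\end{align*}
Thus $D\mapsto\comm{D}{X^a}$ is a homomorphism on the abelian group of diagonal unitaries. Now induct on $i$. The set $D_1$ (phases times $Z$-strings) is a group. If $D_{i-1}$ is a group and $D,D'\in D_i$, then $\comm{DD'}{X^a}$ and $\comm{D^\dagger}{X^a}$ lie in $D_{i-1}$ for all $a$. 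By the characterization from the inductive step, $DD'\in D_i$ and $D^\dagger\in D_i$, and $\one\in D_i$ trivially.

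The main obstacle is the reduction in the inductive step. Because $C_k$ is not a group for $k\ge3$, closure under Pauli multiplication must be checked directly rather than assumed. The argument is then organized so that only diagonal commutators, never general elements of $C_k$, are fed into the induction.
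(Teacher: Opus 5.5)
Your proof is correct and follows essentially the paper's route. You peel off one commutator at a time via $D\in C_i \Leftrightarrow [D,X^a]\in C_{i-1}$ for all $a$, and you get the group property from the commutator identity, which for diagonal operators reduces to $[DD',X^a]=[D,X^a][D',X^a]$. Your version is more careful than the paper's sketch: you justify stability of $C_k$ under multiplication by Paulis (needed since $DPD^\dagger=[D,X^a]X^aZ^c$), you check inverses, and you make explicit that $D_1$ must include arbitrary global phases for the $i=1$ case of the statement to hold.
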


\begin{proof}(sketch)
    The first property can be proven using that a general $n$-qubit Pauli operator $P$ can be written as $P=i^{p_{\phi}} Z^{p_z}X^{p_x}$ where $p_{\phi}\in \{0,1,2,3\}$ defines a global phase and $p_z,p_x\in \bZ_2^n$ denote the support of the $Z$ and $X$ component of $P$.
    Since $D_i$ is diagonal, the group commutator $\comm{D_i}{P}$ only depends on $p_x$. Moreover, it is diagonal.
    Taken together, it follows that
    \begin{align}
        \comm{D_i}{P} \in D_{i-1}.
    \end{align}
    Iterating this argument shows that $\comm{\comm{\comm{\comm{D}{X^{(1)}}}{X^{(2)}}}{...}}{X^{(i-1)}}\in C_1$, i.e. is a diagonal Pauli operator, and hence, must either commute or anti-commute with any Pauli $X$ operator $X^{(i)}$.
    The argument can be inverted to show the equivalence.
    Since the only operators that commute up to a phase with any Pauli $X$ operator are diagonal Pauli operators $\comm{\comm{\comm{\comm{D}{X^{(1)}}}{X^{(2)}}}{...}}{X^{(i-1)}} \stackrel{!}{\in} C_1$.
    By definition of $C_2$, it follows that $\comm{\comm{\comm{\comm{D}{X^{(1)}}}{X^{(2)}}}{...}}{X^{(i-2)}} \stackrel{!}{\in} C_2$.
    This argument can be iterated and we find that $D\in C_i$.

    The second property follows from $d_1,d_2\in D_i \implies d_1d_2\in D_i$ which can be verified using the elementary properties of group commutators, $[AB, C] = A[B, C]A^{-1}[A, C]$, together with the first property.
\end{proof}

\begin{lemma}
    Let $X^{x}\in \langle \{X\}_{i=1}^n\rangle$ be a Pauli-$X$ operator which we identify with a vector $x\in \bZ_2^n$, $X^x\ket{A} = \ket{A+x}$, for any $A\in \bZ_2^n$.
    The gate $D_{f_i}$ defined by an $i$th order function $f_i:\bZ_2^n\to \bR/\bZ$ with $f_i(0)=0$ is in the $i$th level of the Clifford hierarchy, i.e. for any $x\in \bZ_2^n$
    \begin{align}
        \comm{D_{f_i}}{X^{x}} = X^{x}D_{f_i} X^x D_{f_i}^\dagger = D_{f_{i-1}},
    \end{align}
    for an $(i-1)$th order function $f_{i-1}$ that depends on $x$.
    Vice versa, any diagonal gate in the $i$th level of the Clifford hierarchy is of the form $D_{f_i}$ for an $i$th order function $f_i$.
\end{lemma}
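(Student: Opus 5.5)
The plan is to work entirely at the level of the phase functions, using that conjugating a diagonal gate by a Pauli-$X$ amounts to a discrete derivative. For $x\in\bZ_2^n$ and any $f\colon\bZ_2^n\to\bR/\bZ$ we have $X^xD_fX^x=D_{f(\bullet+x)}$. Hence $\comm{D_f}{X^x}=D_{g_x}$ with $g_x(A)=f(A+x)-f(A)$, up to the ordering convention in the statement, which only flips a sign in the exponent. Unfolding Def.~\ref{def:ithorder},
\begin{align*}
    g_x(A) = (\Delta^{(1)}f)(x) - (\Delta^{(2)}f)(A,x),
\end{align*}
so $g_x$ is a constant plus the function $h_x\colon A\mapsto -(\Delta^{(2)}f)(A,x)$. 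The first step is therefore to show that $\Delta^{(i)}h_x = \pm(\Delta^{(i+1)}f)(\bullet,\dots,\bullet,x)$. I would prove this by induction on the order of the derivative, directly from the recursive definition. By Cor.~\ref{cor:derivates-symmetric} the argument $x$ may be moved to any slot, and the fixed argument $x$ just rides along in the recursion. A constant function has vanishing first and higher derivatives. So if $f$ is $i$th order, $g_x$ is $(i-1)$th order, which is the displayed commutator identity. (If one wants $f_{i-1}(0)=0$, the constant $f(x)-f(0)$ can be split off as a global phase.)

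Next I would prove membership in $C_i$ by induction on $i$. For the base case, $f$ is first order with $f(0)=0$. Then $\Delta^{(2)}f=0$ means $f$ is a homomorphism $\bZ_2^n\to\bR/\bZ$, so it takes values in $\{0,1/2\}$ and $D_f=Z^z$ for some $z$, which lies in $C_1$. For the inductive step, write a Pauli as $P=i^{p}Z^{z}X^{x}$. The $Z$ part commutes with $D_f$, and
\begin{align*}
    D_f X^x D_f^\dagger = X^x\,(X^xD_fX^xD_f^\dagger)=X^x D_{g'}
\end{align*}
with $g'$ of order $i-1$, by the first step applied to $-f$. By induction $D_{g'}\in C_{i-1}$. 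It remains to see that $C_{i-1}$ is closed under left multiplication by Paulis. This follows from a secondary induction: conjugating by $QU$ gives $Q(UPU^\dagger)Q^\dagger$, and each $C_j$ is invariant under Pauli conjugation, up to phases.

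For the converse I would use the first bullet of the Proposition on diagonal gates in the Clifford hierarchy. Write $D=D_f$ with $f$ arbitrary and normalize $f(0)=0$ by a global phase. Iterating the computation above, the $i$-fold nested commutator with $X^{x_1},\dots,X^{x_i}$ is $D_{F}$, where $F(A)$ is the $i$-fold finite difference of $f$ at $A$ in directions $x_1,\dots,x_i$. The condition that $D_F\propto\one$ for all $x_j$ says that this finite difference is independent of $A$. Taking one more difference in the direction of $A$ then shows that the $(i+1)$-fold finite difference vanishes. Finally, I would relate iterated finite differences $\sum_{S\subseteq[k]}(-1)^{|S|}f(A+\sum_{j\in S}x_j)$ to $\Delta^{(k)}f$ as defined recursively; these agree up to sign and lower-order corrections at $A=0$. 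This gives $\Delta^{(i+1)}f=0$.

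I expect the main obstacle to be this last bookkeeping step: matching the paper's recursive derivative, with its special first slot and the convention $\Delta^{(1)}f=f-f(0)$, to the standard iterated difference operators with correct signs. The clean route is to first prove the identity $(\Delta^{(k)}f)(g_1,\dots,g_k)=(-1)^{k+1}\sum_{S\subseteq[k]}(-1)^{|S|}f(\sum_{j\in S}g_j)$ by induction, using the recursion. Everything else then follows mechanically.
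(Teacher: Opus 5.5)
Your proposal is correct. For the converse it coincides with the paper: both use the first bullet of the Proposition, write the $i$-fold nested commutator as $D$ of an $i$-fold difference of $f$, and read off $\Delta^{(i+1)}f=0$ from its independence of $A$.

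The forward direction is where you genuinely differ. The paper never climbs the hierarchy. It computes the nested commutator in closed form, as a phase times $\sum_A e^{2\pi i(\Delta^{(i+1)}f)(A,x_1,\dots,x_i)}\lvert A\rangle\langle A\rvert$, and then invokes the ``$\Leftarrow$'' direction of the same Proposition, so a single computation serves both directions. You instead prove the displayed single-commutator identity of the lemma explicitly, via $\Delta^{(k)}h_x=-(\Delta^{(k+1)}f)(\bullet,\dots,\bullet,x)$, and then induct on the recursive definition of $C_i$. This makes the forward direction independent of the Proposition, which the paper only sketches. The price is Pauli-closure of each $C_j$, plus some phase bookkeeping.

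The phase bookkeeping needs one sentence you have not supplied. At the step $i=2$, the split-off constant $e^{2\pi i f(x)}$ must lie in $\{\pm1,\pm i\}$, because $C_1=\mathcal{P}_n$ is not closed under arbitrary phases. This holds since $2f(x)=(\Delta^{(2)}f)(x,x)\in\{0,\tfrac12\}$ by bilinearity. For $i\geq3$ the phase is harmless, because every $C_j$ with $j\geq2$ is invariant under global phases.

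Finally, your proposed closed form for the recursive derivative has a sign slip. Induction on the recursion gives
\[
(\Delta^{(k)}f)(g_1,\dots,g_k)=-\sum_{S\subseteq[k]}(-1)^{|S|}f\Big(\sum_{j\in S}g_j\Big)
\]
for every $k\geq1$ (check $k=1$). This identity is exact, with no lower-order corrections. Since you only use vanishing, your argument is unaffected.
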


\begin{proof}
    To prove the equivalence we prove two correspondences: Any $i$th order function defines a gate in $C_i$ and the matrix elements of any diagonal gate in $C_i$ are given by an $i$th order function.
    
    Since $D_{f_i}$ is diagonal, we can prove the first statement by showing that
    \begin{align}\label{eq:nested-comm-ithlevel}
        \comm{\comm{\comm{\comm{D_{f_i}}{X^{x_1}}}{X^{x_2}}}{...}}{X^{x_i}} = e^{2\pi i g(\{x_j\}_j)} \one,
    \end{align}
     for any $x = (x_1,...,x_i)\in \bZ_2^{\times i}$ and a function $g:(\bZ_2^n)^{\times i} \to \bR/\bZ$ derived from $f$. We first consider an arbitrary function $f:\bZ_2^n\to \bR/\bZ$ and the associated diagonal operator $D_f$.
    We calculate iteratively,
    \begin{subequations}\label{eqs:nested-comms}
    \begin{align}
        \begin{split}
            \comm{D_f}{X^{x_1}} =& \sum_{A} e^{2\pi i( f(A) - f(A+x_1)) } \ketbra{A}\\
            =& e^{-2\pi i f(x_1)} \sum_{A} e^{2\pi i (\Delta^{(2)}f)(A, x_1)} \ketbra{A}  \qcomma
        \end{split}\\
        \begin{split}
            \comm{\comm{D_f}{X^{x_1}}}{X^{x_2}} =& e^{-2\pi i f(x_1)} \sum_A e^{2\pi i( (\Delta^{(2)} f)(A,x_1) - (\Delta^{(2)}f)(A+x_2,x_1))}\ketbra{A}\\
            =& e^{-2\pi i (f(x_1) + (\Delta^{(2)}f)(x_2,x_1))} \sum_A e^{2\pi i (\Delta^{(3)}f)(A,x_1,x_2)}\ketbra{A},
        \end{split}\\
     \smash{\vdots} \notag \\
    \begin{split}\label{eq:nested-comm-general}
        \comm{\comm{\comm{\comm{D_{f}}{X^{x_1}}}{X^{x_2}}}{...}}{X^{x_i}} =& e^{-2\pi i \left(\sum_{j=1}^{i} (\Delta^{(j)}f)(x_1, ..., x_j) \right)} \sum_A e^{2\pi i (\Delta^{(i+1)}f)(A,x_1,x_2, ..., x_i)}\ketbra{A}
    \end{split}
    \end{align}
    \end{subequations}
    If $f\in\mathcal{F}_i$, $\Delta^{(i+1)}f = 0$ and the right hand side is proportional to $\one$ and we identify
    \begin{align}
        g(x_1, x_2, ..., x_i) = \sum_{j=1}^{i} (\Delta^{(j)}f)(x_1, ..., x_j).
    \end{align}

    To prove the opposite direction note that \eqref{eq:nested-comm-ithlevel} holds for any diagonal gate in the $i$th level of the Clifford hierarchy. It then follows from Eq.~\eqref{eq:nested-comm-general} that the function that defines the matrix elements of the gate must satisfy $\Delta^{(i+1)}f = 0$, i.e. it is an $i$th order function.
\end{proof}

Given the correspondence, we refer to diagonal gates in the $i$th level in the Clifford hierarchy as $i$th order gates.

\section{Examples to calculate twisted errors}\label{app:examples}

In this appendix we demonstrate the formalism described in this paper with three pedagogical examples where errors are assumed to be applied in a single time step, on an otherwise perfect code state.
These errors can be considered the errors obtained as a residual correction from a previous round of correction, prior to applying the non-Clifford gate.
In this setting the computation suits itself to an analytical treatment, and represents the calculation of phenomenological noise in an error-correcting circuit whose fault complex is described by the complex obtained from the tensor product between the complex of the CSS code analysed here and a length-2 repetition code, representing the initial and final measurements of the stabilizers.

\myparagraph{Transversal $CCZ$}
Consider a triple of LDPC CSS codes on the same number of physical qubits, where transversal application of $CCZ$ between the code blocks is a logic gate.
The transversal gate is defined by two bijections $Q_1\stackrel{\varphi}{\rightleftharpoons} Q_2 \stackrel{\vartheta}{\rightleftharpoons} Q_3$, between the qubits of the three code blocks, denoted by $Q_1, Q_2, Q_3$.
The bijections define the triples of qubits coupled by the $CCZ$, $\{\vb{i} = (i, \varphi(i), \vartheta(\varphi(i)))\}_{i\in Q_1}$.
The associated third-order function is
\begin{align}
    S[A] = \frac{1}{2} \sum_{\vb{i}} \overline{A}_{\vb{i}_1} \overline{A}_{\vb{i}_2}\overline{A}_{\vb{i}_3},
\end{align}
where $A=(A_1, A_2, A_3)\in\bZ_2^{Q_1}\oplus \bZ_2^{Q_2}\oplus \bZ_2^{Q_3}$ is the vector spanned by all qubits among the three code blocks.
The index $\vb{i}_1$ refers to the $i$th component in the first block, $\vb{i}_2$ the $\varphi(i)$th component in the second block and similarly for $\vb{i}_3$.
$S$ is understood modulo $1$, and $\overline{\bullet}$ denotes the lift from $\{0,1\}\to \bZ$.

For simplicity, consider the setting where a code state, that can be thought of as the output of a perfectly post-selected measurement of the code-stabilizers, is subject to some $X$ error in a correctable region of the code. In practice, this will be a residual error of a QEC correction step just before applying the gate.
We aim to characterize the output distribution of stabilizer measurements after applying the gate.
The $Z$-stabilizer measurements are determined by the residual $X$ error and additional measurement errors, both of which are captured by $b$, a binary string on the $X$ measurement-error locations prior to applying the transversal gate.
Since the $X$ error acts on a correctable region of the code, we can choose a filling $A$ that is supported on the $Z$ tensors that are on the worldlines of the qubits on which the error acted, starting from the time-slice after the error happened.
This is a special case of the case we consider here, consisting of only two rounds of stabilizer measurements with phenomenological noise.
Without further $Z$ errors, some $X$ stabilizers are still violated and their pattern is fully captured by the twisted errors.

To calculate $\widetilde{M}^b$, we identify $\bZ_2^{D_X}$ with the detectors that are formed by the initial $X$ stabilizer and the consecutive measurement, so they are in 1-1 correspondence with $X$-stabilizers of the code.
For each measured stabilizer generator $v$, $d_0v$ is supported on the spacetime qubit locations of the qubits in the support of $v$.
First, consider a single stabilizer generator in code block 1, $d_0 v = ((d_0v)_1, 0,0)$, which we abbreviate without the subscript.
We calculate by expanding $\Delta^{(3)}S$ according to Def.~\ref{def:ithorder},
\begin{subequations}
\begin{align}
\widetilde{M}^b_{e,v}
&= 2(S[A+e+d_0v] - S[A+e] - S[A+d_0v] - S[e+d_0v] + S[A] + S[e] + S[d_0v])
\\
&= \sum_{\vb{i}}\Bigl(
   (\overline{A+e+d_0v})_{\vb{i}_1}
   (\overline{A+e})_{\vb{i}_2}
   (\overline{A+e})_{\vb{i}_3}
   -(\overline{A+e})_{\vb{i}_1}
    (\overline{A+e})_{\vb{i}_2}
    (\overline{A+e})_{\vb{i}_3}
\nonumber\\
&\qquad
   -(\overline{A+d_0v})_{\vb{i}_1}
    \overline{A}_{\vb{i}_2}
    \overline{A}_{\vb{i}_3}
   +\overline{A}_{\vb{i}_1}
    \overline{A}_{\vb{i}_2}
    \overline{A}_{\vb{i}_3}
    + \overline{e}_{\vb{i}_1}\overline{e}_{\vb{i}_2}\overline{e}_{\vb{i}_3} - (\overline{e+ d_0v})_{\vb{i}_1}\overline{e}_{\vb{i}_2}\overline{e}_{\vb{i}_3}\Bigr)
\\
&= \sum_{\vb{i}}
      d_0v_{\vb{i}_1}\left(
         A_{\vb{i}_2}e_{\vb{i}_3}
         +A_{\vb{i}_3}e_{\vb{i}_2}
      \right) \in \bZ_2.\label{eq:CCZtildeM-elements}
\end{align}
\end{subequations}
In the second step, we have used that the entire expression is interpreted modulo $2$ such that the individual lifts to integers can be combined to a single one.
Then, all expressions cancel that do not involve $d_0v$.
A similar expression holds for the stabilizers supported on code blocks 2 and 3, respectively, by permuting the indices.

From this expression, we can directly read off the twisted errors as the vectors $e$ for which Eq.\eqref{eq:CCZtildeM-elements} is non-zero.
For a given $A\colon d_1A=b$, the space of twisted errors associated to a single stabilizer supported on block 1 is spanned by products of $Z$ errors on block 2 and block 3.
It is non-trivial on the qubits of block 2 that are involved in a $CCZ$ between a qubit that is part of the stabilizer support $v$ on block 1 and a qubit in block 3 on which $A$ is non-zero, and similarly for $Z$ errors on code block 3.

More generally, the space of twisted errors can be expressed as
\begin{align}\label{eq:twistederrors-CCZ}
    \mathcal{E}_Z^b = \left\{ \mqty(x^2\odot A^3 + x^3\odot A^2\\ x^1\odot A^3 + x^3\odot A^1\\ x^1\odot A^2 + x^2\odot A^1) \;\Bigg|\; x=d_0v,\; v\in \bZ_2^{D_X}\right\},
\end{align}
where $\odot$ denotes pointwise multiplication between the vectors, $a\odot b = (a_1b_1, a_2b_2, ...)$.
To fully characterize the syndrome distribution, we need to calculate the expected values of the twisted detectors, $\kappa_b$.
For this, we explicitly need $(K^b)^T$, the generator matrix of $\ker(M_b)$.
Both quantities depend on $b$ explicitly and will define the subspace of detectors in $\bZ_2^{D_X}$ that is not violated by 
$\mathcal{E}_Z^b$.
Due to non-trivial cancellations in the untwisted syndrome of the twisted errors, we cannot quantify $K^b$ better without an explicit code and $b$.
Once given, computing $(K^b)^T$ is efficient and $\kappa_b$ can be computed as stated in Thm.~\ref{thm:twistedconstraints}.

\myparagraph{Transversal $T$}
In the same way as for the $CCZ$ we can analyse a transversal $T$ gate on a single CSS code block with physical qubits $Q$.
The associated third-order function is
\begin{align}
    S[A] = \sum_{i\in \hat{Q}} \frac{1}{8}\overline{A}_i,
\end{align}
where $\hat{Q}$ denotes the spacetime locations where the $T$ is supported, which is in bijection with the qubits of the CSS code.
As in the first example, the residual $X$ errors have syndrome $b$ in the circuit and we can choose a filling $A$ supported on the $Z$ tensors of the worldlines of the qubits on which the error acted.

To calculate $\widetilde{M}^b$, we identify the untwisted detectors $\bZ_2^{D_X}$ with the $X$-stabilizers of the code.
For each stabilizer generator $v$, its associated detecting region $d_0v$ intersects with the spacetime qubit locations associated with the support of $v$.
Using this, we explicitly calculate
\begin{subequations}
\begin{align}
    \widetilde{M}^b_{e,v} =& \frac{1}{4}\sum_i \bigg( (\overline{A+e+d_0v})_i - (\overline{A+e})_i - (\overline{A+d_0v})_i - (\overline{e+d_0v})_i + \overline{A}_i + \overline{e}_i + \overline{d_0v}_i \bigg)\\
    =& \sum_i \overline{A}_i\overline{e}_i\overline{d_0v}_i \in \bZ_2,
\end{align}
\end{subequations}
where the subscript $T$ indicates the restriction of the vectors onto the support of the physical $T$ gates in spacetime.
We used the identity for the triple product of integers that holds in $\bZ_2$,
\begin{align}
    \frac{1}{4}\left( \overline{a+e+x} - \overline{a+e} - \overline{a+x} - \overline{e+x} + \overline{a} + \overline{e} + \overline{x}\right) = aex \mod 2.
\end{align}
Since we are considering the phenomenological case, a simple choice of $A$ is the extension of $b$ into the time direction.
Hence, $A$ can be thought of as being supported on the worldlines of the qubits on which the $X$ errors acted.
Consequently, the twisted errors can again be expressed using the pointwise multiplication, and we obtain
\begin{align}\label{eq:twistederrors-T}
    \mathcal{E}_Z^b = \left\{ A_T\odot d_0v \;|\; v\in\bZ_2^{D_X}\right\},
\end{align}
where $A_T$ denotes the restriction of $A$ onto $\hat{Q}$.
Without further structure of the error and its syndrome $b$ we cannot quantify the number of independent twisted detectors and hence also not give an explicit formula for $\kappa_b$.
For any instance, however, we can efficiently find the exact expression algorithmically which is not that different from a computable algebraic expression.
The space of twisted detectors is the kernel of the matrix with elements
\begin{align}
    M^b_{v,w} = \sum_i (d_0 v_T)_i(A_T)_i(d_0w_T)_i = \langle d_0v_T, A_T d_0w_T\rangle.
\end{align}
And given $A$ we can compute $\kappa_b$ using the formula in Thm.~\ref{thm:twistedconstraints}.
These expressions have been derived in more detail in Ref.~\cite{bauer2025planar} for families of 3D color codes, on general 3-colexes.

\begin{remark}
    Note that a similar formula to Eqs.~\eqref{eq:twistederrors-CCZ} and \eqref{eq:twistederrors-T} holds for each $CCZ$ and $T$ individually, and similarly for $CS$, which leads to an efficient way to sample twisted errors, without the offset error that guarantees that the expected values are the correct ones, $\kappa_b$.
\end{remark}

\myparagraph{Cup-product circuit}
As a last example, we review the calculation from Ref.~\cite{highthresholdJIT}, for $X$ errors prior to a 3-copy cup product circuit~\cite{Breuckmann2026cupsgates}.
This calculation only assumes a residual $X$ error configuration before the circuit. Our formalism would also capture errors throughout the circuit. The expressions would then, however, not be expressed in terms of cup products alone but would depend on more fine-grained information on the circuit.

The cup-product on the CSS code gives a CSS circuit between triples of the code. 
We consider the qubits of the three code blocks in a single set, $Q=(Q_1,Q_2,Q_3)$.
For each qubit in a single block $q$, we denote its ``copy'' on block $i$ with $q^i$.
In that notation, the cup-product unitary gate can be represented by a polynomial
\begin{align}
    S[A] = \frac{1}{2} A^1\cup A^2 \cup A^3,
\end{align}
where $\hat{Q}$ denotes the spacetime locations of the unitary, which is in bijection with the set of qubits of a single block.

Similar to the prior examples, we consider a residual $X$ error configuration that defines a 2-cocycle $b$ subject to condition C.
In the simple case we consider here the filling can be chosen to be supported on the worldlines of the qubits on which the error acted, just after the $X$ errors.

The cup product is constructed to fulfil the Leibniz rule
\begin{align}
    d(x\cup y) = dx\cup y + x\cup dy
\end{align}
for arbitrary cochains $x,y$.
Here, $x,y$ are spacetime $Z$ error locations and supports of $CCZ$ gates defined by $S$.
This makes the calculation of twisted errors particularly tractable.
For their generating matrix we obtain
\begin{align}
    \widetilde{M}^b_{e,v} = e^1\cup b^2 \cup v^3 + b^1\cup e^2\cup v^3 + e^1\cup v^2 \cup b^3 + b^1\cup v^2 \cup e^3 + v^1\cup e^2 \cup b^3 + v^1\cup b^2 \cup e^3,
\end{align}
where we used the Leibniz rule and $d_1A = b$.
Note that this simply sums over all triples of $b^i, v^j, e^k$ for which the cup-product is non-zero.
A similar expression can be found for $M^b$ and $\kappa_b$.

\end{appendix}

\bibliographystyle{quantum}
\bibliography{refs}

\end{document}